\definecolor{refcolor}{RGB}{0,0,190}
\begin{document}
\newtheorem{theorem}{Theorem}[section]
\newtheorem{proposition}[theorem]{Proposition}
\newtheorem{lemma}{Lemma}[section]
\newtheorem{corollary}{Corollary}[section]
\newtheorem{definition}{Definition}[section]
\newtheorem{principle}{Principle}[section]
\newtheorem{hypothesis}{Hypothesis}[section]
\newtheorem{property}{Property}[section]
\newtheorem{problem}{Problem}[section]
\newtheorem{objection}{Objection}
\theoremstyle{remark}
\newtheorem{remark}{Remark}[section]
\newtheorem{example}{Example}[section]
\newtheorem{reply}{Reply}

\newenvironment{solution}[2] {\paragraph{Solution to {#1} {#2} :}}{\hfill$\square$}

   
\def\({\left(}
\def\){\right)}

\newcommand{\tn}{\textnormal}
\newcommand{\ds}{\displaystyle}
\newcommand{\dsfrac}[2]{\displaystyle{\frac{#1}{#2}}}

\newcommand{\statespace}{\mathcal{S}}
\newcommand{\hilbert}{\mathcal{H}}
\newcommand{\vectorspace}{\mathcal{V}}
\newcommand{\mc}[1]{\mathcal{#1}}
\newcommand{\ms}[1]{\mathscr{#1}}

\newcommand{\qmU}{$\mathscr{U}$}
\newcommand{\qmR}{$\mathscr{R}$}
\newcommand{\qmUR}{$\mathscr{UR}$}
\newcommand{\qmDR}{$\mathscr{DR}$}

\newcommand{\R}{\mathbb{R}}
\newcommand{\C}{\mathbb{C}}
\newcommand{\Z}{\mathbb{Z}}
\newcommand{\K}{\mathbb{K}}
\newcommand{\N}{\mathbb{N}}
\newcommand{\Prj}{\mathcal{P}}
\newcommand{\abs}[1]{\left|#1\right|}

\newcommand{\de}{\operatorname{d}}
\newcommand{\tr}{\operatorname{tr}}
\newcommand{\im}{\operatorname{Im}}

\newcommand{\ie}{\textit{i.e.}}
\newcommand{\vs}{\textit{vs.}}
\newcommand{\eg}{\textit{e.g.}}
\newcommand{\etc}{\textit{etc}}
\newcommand{\etal}{\textit{et al.}}

\newcommand{\Span}{\tn{span}}
\newcommand{\pde}{PDE}
\newcommand{\U}{\tn{U}}
\newcommand{\SU}{\tn{SU}}
\newcommand{\GL}{\tn{GL}}

\newcommand{\schrod}{Schr\"odinger}
\newcommand{\vonneum}{Liouville - von Neumann}
\newcommand{\ks}{Kochen-Specker}
\newcommand{\leggarg}{Leggett-Garg}
\newcommand{\bra}[1]{\langle#1|}
\newcommand{\ket}[1]{|#1\rangle}
\newcommand{\braket}[2]{\langle#1|#2\rangle}
\newcommand{\expectation}[1]{\langle#1\rangle}
\newcommand{\Herm}{\tn{Herm}}
\newcommand{\Sym}[1]{\tn{Sym}_{#1}}

\newcommand{\btimes}{\boxtimes}
\newcommand{\btimess}{{\boxtimes_s}}

\newcommand{\x}{\mathbf{x}}
\newcommand{\M}{\mathbb{E}_3}
\newcommand{\D}{\mathbf{D}}
\newcommand{\dn}{\tn{d}}
\newcommand{\db}{\mathbf{d}}
\newcommand{\n}{\mathbf{N}}
\newcommand{\V}[1]{\mathbb{V}_{#1}}
\newcommand{\F}[1]{\mathcal{F}_{#1}}
\newcommand{\Fvacuumfield}{\widetilde{\mathcal{F}}^0}
\newcommand{\nD}[1]{|{#1}|}
\newcommand{\Lin}{\mathcal{L}}
\newcommand{\End}{\tn{End}}
\newcommand{\vbundle}[4]{{#1}\to {#2} \stackrel{\pi_{#3}}{\to} {#4}}
\newcommand{\vbundlex}[1]{\vbundle{V_{#1}}{E_{#1}}{#1}{M_{#1}}}
\newcommand{\rep}{\rho_{\scriptscriptstyle\btimes}}

\newcommand{\Hint}{H_{\tn{int}}}

\newcommand{\quot}[1]{``#1''}

\def\sref #1{\S\ref{#1}}

\newcommand{\dBB}{de Broglie--Bohm}
\newcommand{\dBBt}{{\dBB} theory}
\newcommand{\pwt}{pilot-wave theory}
\newcommand{\PWT}{PWT}
\newcommand{\NRQM}{{\textbf{NRQM}}}

\newcommand{\image}[3]{
\begin{center}
\begin{figure*}[!ht]
\includegraphics[width=#2\textwidth]{#1}
\caption{\small{\label{#1}#3}}
\end{figure*}
\end{center}
}

\title{A representation of the wave function on the three-dimensional space}

\author{Ovidiu Cristinel Stoica}
\affiliation{
 Dept. of Theoretical Physics, NIPNE---HH, Bucharest, Romania. \\
	Email: \href{mailto:cristi.stoica@theory.nipne.ro}{cristi.stoica@theory.nipne.ro},  \href{mailto:holotronix@gmail.com}{holotronix@gmail.com}
	}%

\date{\today}

\begin{abstract}
One of the major concerns of Schr\"odinger, Lorentz, Einstein, and many others about the wave function is that it is defined on the $3\mathbf{N}$-dimensional configuration space, rather than on the $3$-dimensional physical space. This gives the impression that quantum mechanics cannot have a three-dimensional space or spacetime ontology, even in the absence of quantum measurements. In particular, this seems to affect interpretations which take the wave function as a physical entity, in particular the many worlds and the spontaneous collapse interpretations, and some versions of the pilot wave theory.

Here, a representation of the many-particle states is given, as multi-layered fields defined on the $3$-dimensional physical space. This representation is equivalent to the usual representation on the configuration space, but it makes it explicit that it is possible to interpret the wave functions as defined on the physical space. As long as only unitary evolution is involved, the interactions are local. I intended this representation to capture and formalize the non-explicit and informal intuition of many working quantum physicists, who, by considering the wave function sometimes to be defined on the configuration space, and sometimes on the physical space, may seem to researchers in the foundations of quantum theory as adopting an inconsistent view about its ontology. This representation does not aim to solve the measurement problem, and it allows for Schr\"odinger cats just like the usual one. But it may help various interpretations to solve these problems, through inclusion of the wave function as (part of) their primitive ontology.

In an appendix, it is shown how the multi-layered field representation can be extended to quantum field theory.
\end{abstract}

\maketitle

\section{Introduction}
\label{s:intro}

\subsection{The problem, and various attempts to solve it}
\label{s:intro_problem}

Perhaps not as important as the measurement problem and the problem of the emergence of the quasi-classical world at macroscopic level, the fact that the wave function of many particles is defined on the $3\mathbf{N}$-dimensional configuration space, rather than on the $3$-dimensional physical space, has been a point of concern for many physicists, ever since the discovery of quantum mechanics.

In a letter to {\schrod}, Lorentz wrote (\cite{Przibram1967LettersWaveMechanics}, p. 44):

\begin{quote}
If I had to choose now between your wave mechanics and the matrix mechanics, I would give the preference to the former, because of its greater intuitive clarity, so long as one only has to deal with the three coordinates $x$, $y$, $z$. If, however, there are more degrees of freedom, then I cannot interpret the waves and vibrations physically, and I must therefore decide in favor of matrix mechanics.
\end{quote}

Similarly, Einstein wrote to Ehrenfest that ``{\schrod} is, in the beginning, very captivating. But the waves in $n$-dimensional coordinate space are indigestible'' (\cite{Howard1990EinsteinWorriesQM}, August 28, 1926). He expresses similar concerns in letters to Lorentz (May 1, 1926), Ehrenfest (June 18, 1926), Lorentz (June 22, 1926), Sommerfeld (August 21, 1926), Lorentz (February 16, 1927) \cite{Howard1990EinsteinWorriesQM}.

{\schrod} in particular seemed disappointed that he couldn't interpret the wave function as a physical entity in space and time, in the same way de Broglie proposed for single particles (\cite{BacciagaluppiValentini2009SolvayConference}, p. 447):

\begin{quote}
Of course this use of the $q$-space is to be seen only as a mathematical tool, as it is often applied also in the old mechanics; ultimately, in this version also, the process to be described is one in space and time. In truth, however, a complete unification of the two conceptions has not yet been achieved. Anything over and above the motion of a single electron could be treated so far only in the \emph{multi}-dimensional version...
\end{quote}

He expressed more similar worries in \cite{Przibram1967LettersWaveMechanics,Przibram2011LettersWaveMechanics,Schrodinger1926QuantisierungAlsEigenwertproblem,Schrodinger2003CollectedPapersWaveMechanics}. Einstein had similar doubts \cite{Howard1990EinsteinWorriesQM,FineBrown1988ShakyGameEinsteinRealismQT}, and Bohm too \cite{Bohm2004CausalityChanceModernPhysics}.

These concerns continue to exist even today, as can be seen {\eg} in \cite{Monton2006QM3Nspace,NeyAlbert2013WavefunctionMetaphysics,Norsen2017FoundationsQM,Gao2017MeaningWavefunction,Maudlin2019PhilosophyofPhysicsQuantumTheory}, where more discussions of this problem can be found. In the rest of this subsection, I will briefly review some modern attempts to either avoid the problem, or to reduce its impact. The rest of the article will present the detailed mathematical construction of a representation of the wave function on the three-dimensional ($3$D) space, whose Hamiltonian evolution is local, and its implications will be discussed.

The most direct way, although probably not the easiest one to accept, is to take seriously the idea that the configuration space is indeed the truly physical arena on which the wave function is defined. The fact that it looks three-dimensional should be enough to its inhabitants ``if they don't look too closely'' \cite{DavidAlbert1996ElementaryQuantumMetaphysics,Vaidman2016AllIsPsi}. This is probably the intuition behind many working quantum theorists, who may seem to some of those concerned with the foundations as not realizing that there is a problem with the ontology of the wave function (see {\eg} \cite{Norsen2017FoundationsQM}, p. 134-135). I am not sure that they ignore the problem, they may simply find the wave function on the configuration space acceptable, it would not be the first time physics challenges our classical intuitions. Or maybe they have an intuitive view like the one that the representation presented in this article will make rigorous. 

But nevertheless, the position that a wave function defined on the configuration space is acceptable has been well developed at the philosophical-foundational level too \cite{DavidAlbert1996ElementaryQuantumMetaphysics,Loewer1996HumeanSupervenience,DAlbertBLoewer1996TailsSchrodingerCat,Ney2012Status3DQuantum,Ney2013OntologicalReductionWavefunctionOntology,North2013StructureOfQuantumWorld,Albert2019How2TeachQM}, in particular in connection to the many worlds interpretation \cite{Barrett1999QMofMindsWorlds,SEP-Vaidman2002MWI,Wallace2002WorldsInMWI,Wallace2003EverettAndStructure,BrownWallace2005BohmVsEverett,Barrett2017TypicalWorlds}. This position was criticized in \cite{Monton2002WavefunctionOntology,Monton2006QM3Nspace,Maudlin2007CompletenessSupervenienceOntology,Allori2008CommonBMandGRW,Maudlin2010CanTheWorldBeOnlyWavefunction,Maudlin2013TheNatureOfTheQuantumState,Monton2013Against3NSpace,Chen2017OurFundamentalPhysicalSpace,Emery2017AgainstRadicalQuantumOntologies,Maudlin2019PhilosophyofPhysicsQuantumTheory}. In particular, in \cite{Lewis2004LifeConfigurationSpace} the main argument of \cite{DavidAlbert1996ElementaryQuantumMetaphysics} is questioned, but the position that the wave function should also be a $3$D-object is defended on the grounds that the Hamiltonian is invariant only to the isometries of space, and not of those of the configuration space. More debates about the wave function ontology can be found in \cite{NeyAlbert2013WavefunctionMetaphysics}.

Another proposed possibility is to interpret the wave function as a \emph{multi-field}, {\ie} something like a field, but which assigns properties to regions of space \cite{Forrest1988QuantumMetaphysics,Belot2012QuantumStatesPrimitiveOntologists,Chen2017OurFundamentalPhysicalSpace,Chen2018IntrinsicStructureQM,Chen2019RealismAboutWavefunction,HubertRomano2018WavefunctionMultifield}. One of its advantages is to give a faithful representation of indistinguishable particles, one that doesn't work out naturally in the configuration space.

Starting from Bohmian Mechanics, Norsen developed a theory of local beables, in which each particle has an individual wave function on space, and entanglement is maintained through some additional fields \cite{Norsen2010TheoryOfLocalBeables}. Its purpose was demonstrative for the possibility to avoid a wave function on the configuration space, and works for spinless nonrelativistic particles whose wave functions are everywhere analytic.

Another somewhat conciliating position between the configuration space and the physical space is to treat the wave function as defined on the physical space for separate particles, and to use their reduced density matrices for the entangled particles \cite{WallaceTimpson2010QMOnSpacetime}. An extension to the relativistic case and quantum field theory is done in \cite{Swanson2018RelativisticSpacetimeStateRealist}.

\subsection{Summary of the proposed solution}
\label{s:intro_solution}

By contrast to proposed ontologies on the configuration space or on subsets of the physical space, the one following from the representation presented here is defined pointwisely, in a local separable manner, on the $3$D space. And by contrast to ontologies that do not contain the necessary information to completely recover the wave function, the representation proposed here is faithful. Note that it does not necessarily follow that this representation reflects the true ontology, so I will take a neutral position and avoid making such a claim in either direction. By this, I hope to allow for the possibility to adopt it freely in various interpretations, according to their own ontological commitments, regardless whether the wave function is seen as ontic, as epistemic, as nomological, or various combinations.

Section \sref{s:def} introduces some basic definitions and notations, and recalls some known facts about fiber bundles and nonrelativistic quantum mechanics, which will be used in the subsequent sections. Section \sref{s:st_rep} builds gradually the $3$D space representation of the wave functions, which is shown to be isomorphic to the usual representation as functions on the configuration space. The resulting \emph{multi-layered field representation} is shown to be defined on the $3$D space.

Now I will sketch, in a rather handwaving manner, the proposed solution, which is effectively constructed mathematically in section \sref{s:st_rep}, and analyzed in the subsequent sections. The first remark is that all quantum states are superpositions of tensor products of states defined on the $3$D space. In terms of wave functions on the configuration space, this already suggests the following:
\begin{enumerate}
	\item 
Separable states of the form $\Psi(\x_1,\ldots,\x_\n)=\psi_1(\x_1)\ldots\psi_\n(\x_\n)$ consist of $\n$ wave functions defined on the $3$D space.
	\item 
Since general states are linear combinations of separable states, solving the problem for separable states makes possible to take the representation of unseparable states as linear combinations of representations of separable states.
\end{enumerate}

This is probably already the intuition of some researchers, but it needs to be constructed rigorously, and this is not as straightforward as we may hope.

A first major problem appears already in the first step: as combined in the product state $\Psi$, the wave functions $\psi_j$ depend on $\n$ distinct positions $\x_1,\ldots,\x_\n$. One can't replace all their arguments $\x_j$ by the same position $\x$, since this will lead to the loss of the most information contained in $\Psi$. We need a way to keep all the wave functions $\psi_j$ distinct in a collection $\(\psi_1,\ldots,\psi_\n\)$. It seems natural to consider them as parts of a highly dimensional field on the $3$D space, but there is a problem: the decomposition $\Psi(\x_1,\ldots,\x_\n)=\psi_1(\x_1)\ldots\psi_\n(\x_\n)$ is not uniquely determined. We can multiply by different constants each of the wave functions $\psi_j$, and get the same result. This complication forces us to introduce an equivalence relation, or a gauge symmetry, and to take as representation a field on the $3$D space which is the result of factoring this equivalence out of the collections of the form $\(\psi_1,\ldots,\psi_\n\)$. This will be done in \sref{s:separable}. Then, another major problem arises, since factoring out the equivalence seems, at first sight, to make the construction make sense only globally on the $3$D space. So the representation seems to not be \emph{local separable}. This problem will be addressed, in the same way in which, in gauge theory, global symmetries are promoted to local symmetries, in \sref{s:local_separability}. The result is a representation in terms of fields with a large number of components, on the $3$D space. Such a field belongs to a space of functions that will be called \emph{layer} in the following.

Once the first step of the representation is realized, we need to extend the representation to entangled or nonseparable states. One may be tempted to take them as linear superpositions of representations of separable states. But this doesn't work immediately, since such linear combinations don't commute with the equivalence relation used to represent separable states, and even if they would commute, the result would be again a separable state in the same layer, which is not the nonseparable state we want to represent. For this reason, we need to represent nonseparable states as superpositions of representations of separable states from different layers, hence the name \emph{multi-layered field representation}. To obtain it, we need to avoid some pitfalls, which will be explained along the way. The way to do it is to first represent the separable states forming a basis of the Hilbert space, and then construct the vector bundle freely generated by these. This construction will be done gradually in \sref{s:more_separable} and \sref{s:nonseparable}. I summarize the result in Figure \ref{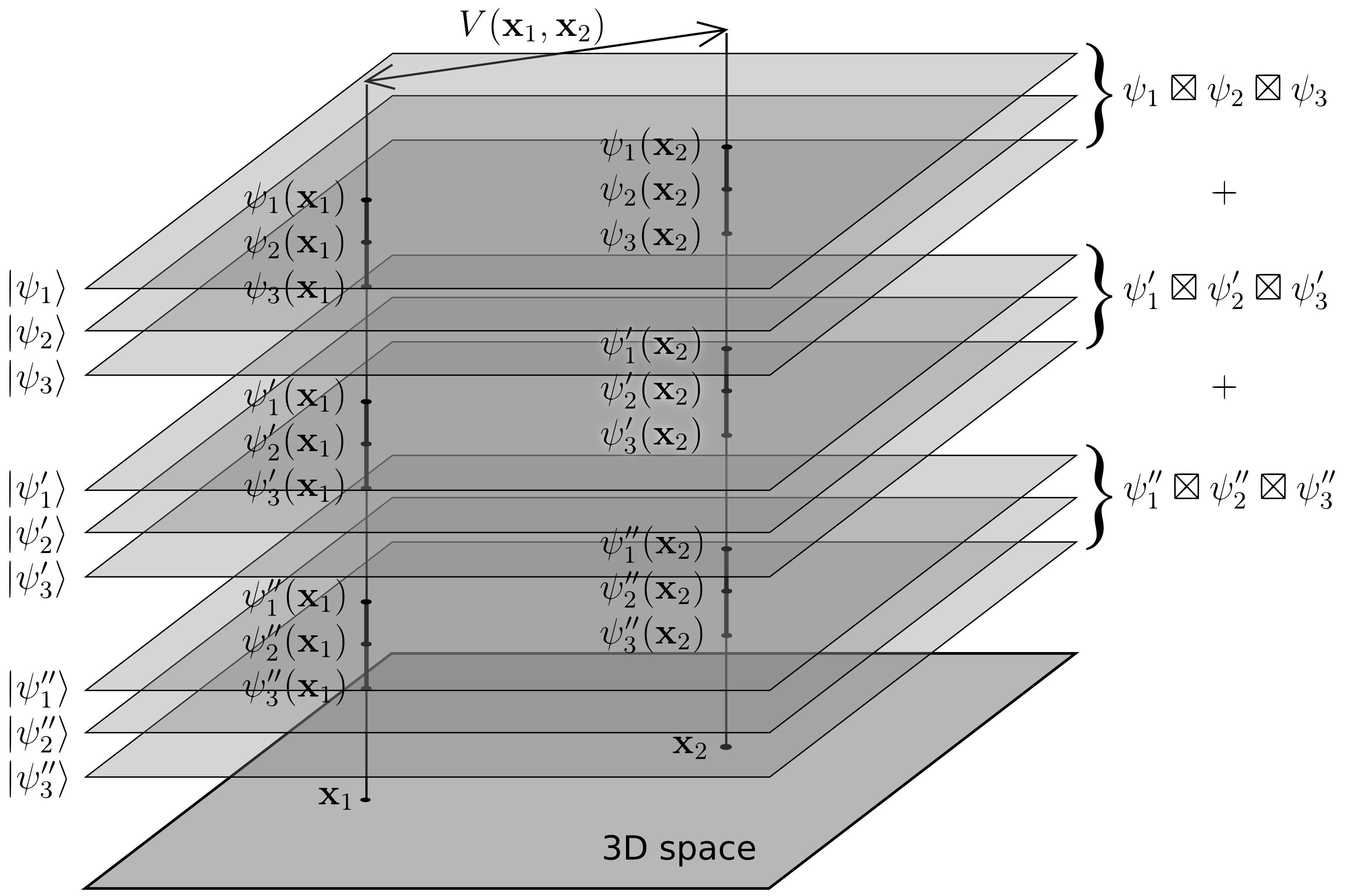}, where an example of a three-particle state is given.

\image{multi-layered.png}{0.7}{Depiction of the multi-layered field representation. Separable states are represented as fields defined on the $3$D space, in the same layer. Each layer corresponds to product states, and has a certain gauge symmetry (\sref{s:separable}). Sums of separable states are represented as sums of fields representing separable states (\sref{s:more_separable}). The interactions happen between pointwise values of fields, and depend on the $3$-dimensional distance, not on the configuration space $3\n$-dimensional distance. If the interactions propagate locally, then the unitary evolution is local (section \sref{s:locality}).}

From physical point of view, each layer contains $\n$ $3$D wave functions, corresponding to $\n$ particles. There is nothing more here about the particles, than their $3$D wave functions. They propagate and interact according to the {\schrod} equation. If the layers are chosen corresponding to eigenspaces of the Hamiltonian, then the $3$D wave functions remain in the same layer. But the principle of superposition allows the existence of other independent such $3$D wave functions, which evolve independently. They will be in a different layer, and the $3$D wave functions from one layer will ignore those in other layers. Note that the layers can be defined differently, if we choose a different basis to start with, and if this basis is not an eigenbasis of the Hamiltonian, the $3$D wave functions will move from one layer to another.

In section \sref{s:locality}, we take a look at the unitary dynamics of the multi-layered fields, and we find that it is local, as long as unitary evolution is not interrupted by a state vector reduction (collapse of the wave function). The interpretation of this representation is given in section \sref{s:multilayer}. If measurements are taking place, especially on entangled states, nonlocal correlations appear, so section \sref{s:EPR} will analyze this, exemplifying with the EPR experiment.

In section \sref{s:implications} I propose the multi-layered field representation as a way to provide a $3$D space (primitive) ontology for various interpretations of quantum mechanics. 
The construction proposed here doesn't aim to solve the measurement problem, and the problem of the emergence of the quasi-classical world at macroscopic level. 
It's only purpose is to replace the wave function on the configuration space by something completely equivalent, but defined on the $3$D space.
To solve these problems, an extension of quantum mechanics, usually called \emph{interpretation}, is needed. In particular, in a {\pwt}, the wave functions corresponding to particles will be supplemented by point-particles, which allow us to solve these foundational problems. Possibly, the Many Worlds Interpretation may not require more than the wave function evolving unitarily, in which case the branches corresponding to the many worlds will simply be layers. But I will not discuss this in this article, limiting myself to simply provide the $3$D space representation.

Appendix \sref{s:vbundle_math} contains the more general and rigorous definition of vector bundles. Appendix \sref{s:quantum_fields} describes the extension of multi-layered field representation to quantum field theory, exemplifying with scalar fields.

In section \sref{s:objections} I try to anticipate and discuss potential objections to the multi-layered field representation as a representation on the $3$D space. Such a dedicated section may be unusual, but I considered that, given the resilience of the problem and the widespread opinion that it is unsolvable, it is normal for objections to exist, and it is better to address them directly, rather than leaving room for misunderstandings.

\section{Basic definitions and notations}
\label{s:def}

This section recalls or introduces some basic definitions and notations concerning fiber bundles, the configuration space, and the wave functions, which will be used in the next sections.

\subsection{Operations with vector bundles}
\label{s:vbundle}

An important notion is that of a \emph{field}, in particular of \emph{vector field}. The vector fields will not necessarily have as values 3D vectors, because their values can be, depending on the situation, spinors, vectors, tensors, vectors from internal spaces used in gauge theory \etc. All these are still classical fields. What I mean by ``fields'' here will be used to represent states, being in fact vector-valued wave functions, so they should not be confused with \emph{quantum field operators}, which are defined on the vector spaces consisting of such fields. 
Their purpose is to be used to represent on the 3D Euclidean space $\M\cong\R^3$ quantum states or wave functions from {\NRQM}, which are usually defined on the configuration space.

Vector fields on a base space $M$ are taken here to simply be functions defined on $M$, and valued in some fixed vector space $V$. As an example, one can think at the electric field, which associates to each point $\x$ of $\M$ a $3$D vector $\mathbf{E}(\x)\in\R^3$.

Vector fields of the same type can be added, and can be multiplied by scalars. Because of this, they form vector spaces. But other operations are possible, resulting in different types of vector fields. These new types of vector fields are best understood mathematically if we take care to specify in which vector bundle each vector field is. 
Then, one can define operations that take vector bundles and give a different vector bundle. The results of operations with vector fields will be then vector fields from the vector bundle constructed like this.

Some of the operations with vector bundles preserve the space on which the vector fields are defined (the base manifold), while others don't. I will be interested in those operations which, when applied to bundles over the 3D Euclidean space $\M$, result again in bundles over $\M$. The representation I will construct in this article will use only such operations, this ensuring that the usual wave functions over the configuration space can be represented as wave functions or fields over $\M$.

In the following, all vector spaces will be considered to be complex.
I will give now a simpler definition of a vector bundle (the ``trivial vector bundle''), which will be enough for the purpose of this article. The more general definition is \ref{def:vector_bundle_math} from Appendix section \sref{s:vbundle_math}.

\begin{definition}
\label{def:vector_bundle_trivial}
Let $M$ a topological space (which here will in general be the $3$D space or the configuration space).
Let $V$ be a complex $k$-dimensional vector space.
We can use $M$ and $V$ to define a complex \emph{trivial vector bundle} $V \to E \stackrel{\pi}{\to} M$, which consists of the following:
\begin{enumerate}
	\item 
	The vector space $V$, called the \emph{typical fiber}.
	\item 
	The \emph{base space} $M$, and the \emph{total space} $E=M\times V$.
	\item 
	The \emph{bundle projection} $\pi:E\to M$, $\pi(x,v)=\{x\}$.
	\item 
	For each $x\in M$, we define the \emph{fiber} over $x$, $V_x:=\pi^{-1}(x)=\{x\}\times V$, and endow it with the vector space structure of $V$.
\end{enumerate}
\end{definition}

A \emph{vector field} $v:M\to V$ can be seen as a section of the total space of the bundle, $E=M\times V$, such that to any $x\in M$, there corresponds a unique point $(x,v(x))\in E$. So, in terms of vector bundles, a vector field defined as the mapping $x\mapsto(x,v(x))$ associates to each $x\in M$ a vector in the fiber $V_x$. This contrasts to the usual interpretation that $v(x)\in V$.
We denote the collection of all vector fields over $M$ by $\Gamma(E)$. Since we can multiply them by complex numbers, and we can add them, and the result is always another vector field, $\Gamma(E)$ is also a vector space. When $V$ has a \emph{Hermitian scalar} product $h$, and $M$ has a volume form $\de x$ which allows us to integrate, we can define a scalar product between two vector fields by
\begin{equation}
\label{eq:vec_field_scalar_prod}
\langle u,v\rangle :=\int_M h(u(x),v(x))\de x. 
\end{equation}
Normally, we work with a vector subspace of $\Gamma(E)$ which satisfies some nice conditions, for example differentiability, or that the scalar product is always finite. In particular, we can choose our vector space of vector fields to form a Hilbert space, for example by keeping only the square-integrable vector fields.

When one or more vector spaces are given, we can use various operations to build other vector spaces. Such operations include
\begin{enumerate}
\item
The \emph{dual} $V^\ast$ of a vector space $V$, consisting of the $\C$-linear maps $f:V\to\C$.
	\item 
The \emph{direct sum}, which associates to a collection of vector spaces $V_j$, $j\in J$, the vector space $\bigoplus_{j\in J} V_j$.
	\item 
The \emph{tensor product}, which associates to a collection of complex vector spaces $V_j$, $j\in J$, the vector space $\bigotimes_{j\in J} V_j$, where the tensor product is taken over $\C$.
	\item 
The \emph{quotient vector space} $V/W$, defined by a vector subspace $W$ of $V$, by the equivalence relation $v_1\sim v_2$ iff $v_2-v_1\in W$. Then, $V/W$ has as vectors the equivalence classes $[v]$ determined by $\sim$.
	\item 
The \emph{tensor algebra}, $\mathcal{T}(V):=\bigoplus_{j\in\N} \(\bigotimes^j V\)$.
	\item 
The \emph{symmetric algebra}, $\Sym+(V)$, obtained by symmetrizing all tensor products in $\mathcal{T}(V)$.
	\item 
The \emph{anti-symmetric}, or \emph{exterior algebra} $\Sym-(V)$, obtained by anti-symmetrizing all tensor products in $\mathcal{T}(V)$.
	\item 
The vector space of \emph{$\C$-linear endomorphisms} between two complex vector spaces $V,W$, $\End(V,W)$. This vector space is canonically isomorphic to $V^*\otimes W$.
	\item
Various combinations of these operations.
\end{enumerate}

There are two main ways to extend these operations with vector spaces to operations with vector bundles. One way is to apply them to vector spaces of vector fields from the bundles. In particular, the tensor product of vector fields from bundles is normally used when constructing many-particle Hilbert spaces out of the Hilbert spaces of single particles. The resulting vector fields can not be usually interpreted as vector fields on $M$, but on larger base spaces. The other way is to apply the operations with vector spaces at the fiber level, which is normally done in the geometry of fiber bundles. In this case, the operations preserve the base manifold. These two ways lead in general to different results.

To see the difference, consider two vector bundles over the same base manifold $M$, $\vbundle{V_1}{E_1}{1}{M}$ and $\vbundle{V_2}{E_2}{2}{M}$.
The direct sum of the vector spaces of their fields, $\Gamma(\vbundle{V_1}{E_1}{1}{M})\oplus\Gamma(\vbundle{V_2}{E_2}{2}{M})$, can be identified with the vector space of fields from a bundle over $M$, whose fibers over points $x\in M$ are the direct sums of the fibers, $V_{1x}\oplus V_{2x}$.
This works because, given two fields $s_1$ and $s_2$ from the two bundles, they are defined on the same space $M$, and can be recovered by knowing $(s_1(x),s_2(x))$ at all $x$.

On the other hand, in the case of the tensor product, the things are dramatically different.
The tensor product of two fields from the same two vector bundles, seen as elements of the vector spaces of vector fields from these bundles, is defined on $M\times M$, and not on $M$, although it is valued in $V_1\otimes V_2$. The main difference in quantum mechanics is that we have to use such tensor products of spaces of fields from bundles.
The main step in the construction presented here is to represent this tensor product of vector fields in terms of vector bundle operations resulting in bundles over the same base manifold, in our case $\M$. The idea is simple, but its realization is not.

\subsection{Classical configuration space for $\n$ particles}
\label{s:class_conf}

Consider first the case of a classical point-particle, having a definite position $\x(t)\in \M$, where $\M=\R^3$ stands for the physical 3D space, and $\x=(x,y,z)\in\M$. 
If the particle's state is characterized only by is its position, its configuration space is $\M$. But it may have different states even if it has the same position, and I denote by $\D$ the set of these possible states. $\D$ may include internal states and the spin state along a fixed axis.
In this case, the configuration space of the particle is $\M\times\D$. If the particle's state is characterized only by position, $\D$ will have only one element, so that $\M\times\D\cong\M$. Each set $\D$ may consist of different kinds of states, in which case $\D$ will be the Cartesian product of other sets, but in most cases it will be simply denoted as a single set for each particle.

We will consider that the type of a particle is completely determined by $\D$, so we will simply say that the \emph{type of the particle} is $\D$.

In the case of $\n$ particles of types $\D_1\ldots,\D_\n$, the configuration space is 
\begin{equation}
\label{eq:config_N}
\underbrace{\M\times\ldots\times \M}_{\tn{$\n$ times}} \times \D_1 \times \ldots \times \D_\n = \M^\n \times \prod_{j=1}^\n\D_j.
\end{equation}

Let $\nD\D$ denote the cardinal of the set $\D$. Then, $\nD{\prod_{j=1}^\n\D_j}=\prod_{j=1}^\n\nD{\D_j}$.

\subsection{Wave functions of $\n$ particles}
\label{s:wave_n}

In \emph{nonrelativistic  quantum mechanics} (\NRQM), the result of the quantization is that instead of having $\n$ classical point-particles, we end out with wave functions defined on the configuration space of $\n$ particles. The values of these wave functions are usually complex numbers. Even if there are particles whose wave functions are real, in order to give a homogeneous treatment I will consider them complex, which is justified by $\R\subset\C$.

After quantization, classical systems of $\n$ point-particles of types $\D_1\ldots,\D_\n$, where $\D_j = \{d_{j}^{1},\ldots,d_{j}^{\nD{\D_j}}\}$ for each $j\in\{1,\ldots,\n\}$, are replaced by quantum system. A quantum state is represented by the \emph{wave function}	

\begin{equation}
\label{eq:universal_wavefunction}
\begin{split}
&\psi:\M^\n \times \prod_{j=1}^\n\D_j \to \C \\
&\psi(\x_1,\ldots,\x_\n,\dn_{1},\ldots,\dn_{\n}) \in \C, \\
\end{split}
\end{equation}
where $\dn_j\in\D_j$ for all $j\in\{1,\ldots,\n\}$.

The set of all wave functions of the form \eqref{eq:universal_wavefunction} forms a vector space, denoted here by $\V{\M^\n,\D_1,\ldots,\D_\n}$. Note that we can include here also the distributions (like Dirac's delta function $\delta(\x)$). We can choose to work with a subspace, for example of the square-integrable wave functions, but we will define this later.

In the particular case when the $\n$ particles are characterized only by positions, and have no additional degrees of freedom, we denote the vector space of wave functions by $\V{\M^\n}$. In the following, this space of scalar functions will also be used to represent the spatial degrees of freedom of wave functions having more degrees of freedom.

\subsection{Vector-valued wave functions on $\M^\n$}
\label{s:reduction_to_m_n}

According to equation \eqref{eq:universal_wavefunction}, the configuration space of a single particle of type $\D$ is $\M\times\D$, hence for a wave function $\psi$ of such a particle, $\psi:\M\times\D\to\C$. So even a single particle is defined on the configuration space $\M\times\D = \bigsqcup_{j\in\D}\M$, rather than on the 3D space $\M$. But this is just a matter of representation, since we can regard $\psi$ as having $\nD \D$ components, each defined on $\M$,
\begin{equation}
\label{eq:wave_components}
\begin{cases}
\psi^1(\x) := \psi(\x, d^1) \\
\ldots \\
\psi^{\nD \D}(\x) := \psi(\x, d^{\nD \D}). \\
\end{cases}
\end{equation}

The components $\psi^j$ of the wave function $\psi$ can be seen as coupled wave functions on $\M$, but I will take the more common and natural view that they form a vector valued wave function or field, as I will explain now. Let's first fix a position $\x_0\in\M$. We define, for all $\psi:\M\times\D\to\C$, the function $\psi(\x_0):\D\to\C$, where
\begin{equation}
\label{eq:wave_D}
\psi(\x_0)(d^j) := \psi(\x_0,d^j).
\end{equation}

We notice the following:
\begin{enumerate}
	\item For any complex function $f:\D\to\C$, there is at least a wave function $\psi:\M\times\D\to\C$, so that $f=\psi(\x_0)$.
	\item For any two complex functions $f_1,f_2:\D\to\C$ and any two complex numbers $c_1,c_2\in\C$, $c_1f_1+c_2f_2$ is also a complex function on $\D$. For any two wave functions $\psi_j:\M\times\D\to\C$, so that $f_j=\psi_j(\x_0)$, $j\in\{1,2\}$, $c_1f_1+c_2f_2=c_1\psi_1(\x_0)+c_2\psi_2(\x_0)$.
\end{enumerate}

Therefore, at each $\x_0\in\M$, the collection of values of the form $\psi(\x_0)(d^j)$ form a complex vector space, which will be denoted by $\V\D$. A basis of $\V\D$ is given by the linear functions
\begin{equation}
\label{eq:vec_D_basis}
(\db_1,\ldots,\db_{\nD\D}),
\end{equation}
defined for all $d^j\in\D$ by
\begin{equation*}
\db_k(d^j) = \delta_k^j.
\end{equation*}
The dimension of the vector space $\V\D$ is $\dim\V\D=\nD\D$.

This is to say that
\begin{equation}
\label{eq:wave_tensor_prod_one}
\V{\M,\D} = \V{\M}\otimes\V\D,
\end{equation}
where $\otimes$ is the complex tensor product, and $\V{\M}$ is the space of scalar functions on $\M$.

This shows that the elements of $\V{\M,\D}$ are representable as wave functions on $\M$, or vector fields, valued in $\V\D$.

The elements of $\V\D$ are linear functions acting on the dual vector space $\V\D^*$ and conversely. Let $(\db^1,\ldots,\db^{\nD\D})$ be the basis on $\V\D^*$ which is dual to the basis \eqref{eq:vec_D_basis} of $\V\D$, {\ie} $\db^j(\db_k) = \delta^j_k$. By this, we have promoted the elements $d^j\in\D$ to a basis $(\db^1,\ldots,\db^{\nD\D})$ of $\V\D^*$.

We are now equipped to represent all wave functions of type $(\D_1\ldots,\D_\n)$ as in \eqref{eq:universal_wavefunction}, which are defined on $\M^\n \times \prod_{j=1}^\n\D_j$, as vector fields or vector-valued wave functions

\begin{equation}
\label{eq:universal_wavefunction_field}
\begin{split}
&\psi:\M^\n \to \otimes_{j=1}^\n\V{\D_j}  \\
&\psi(\x_1,\ldots,\x_\n)(\dn_{1},\ldots,\dn_{\n}) = \psi(\x_1,\ldots,\x_\n,\dn_{1},\ldots,\dn_{\n}). \\
\end{split}
\end{equation}

\subsection{The Hilbert space}
\label{s:hilbert}

We will now recall Dirac's bra-ket notation. For $\x\in\M$, Dirac's delta distributions $\delta(\x)$ is a vector in $\V{\M}$, denoted by $\ket{\x}$. We also denote by $\ket{\dn_j}$ a basis element of the vector space $\V{\D}$. Then, a state vector of $\V{\M^\n,\D_1,\ldots,\D_\n}$ is denoted by
\begin{eqnarray}
\label{eq:psi_ket}
\ket{\psi} = \sum_{(\dn_{1},\ldots,\dn_{\n})\in\D_1\times\ldots\times\D_\n} 
\int_{\M^\n}\de\x_1\ldots\de\x_\n \nonumber \\
\times\psi(\x_1,\ldots,\x_\n,\dn_{1},\ldots,\dn_{\n}) \nonumber \\
\times\ket{\x_1,\ldots,\x_\n,\dn_{1},\ldots,\dn_{\n}},
\end{eqnarray}
where
\begin{eqnarray}
\label{eq:psi_ket_prod}
\ket{\x_1,\ldots,\x_\n,\dn_{1},\ldots,\dn_{\n}} = \ket{\x_1}\otimes\ldots\otimes\ket{\x_\n} \nonumber \\
\otimes\ket{\dn_{1}}\otimes\ldots\otimes\ket{\dn_{\n}}.
\end{eqnarray}

We endow the vector space $\V\D$ with a Hermitian scalar product $\langle,\rangle_D$, uniquely characterized by the requirement that the basis \eqref{eq:vec_D_basis} of the vector space $\V\D$ generated by $\D$ is orthonormal. This scalar product induces a reciprocal Hermitian scalar product on $\V\D^*$, denoted by the same symbols $\langle,\rangle_D$ when no confusion can arise.
The scalar product of two state vectors $\psi_1:\M\to \D$ and $\psi_2:\M\to \D$ is 
\begin{equation}
\label{eq:scalar_prod_one}
\braket{\psi_1}{\psi_2} := \int_{\M}\langle\psi_1(\x),\psi_2(\x)\rangle_\D\de\x.
\end{equation}

The scalar product \eqref{eq:scalar_prod_one} extends to the many-particle wave functions, as the tensor product of the scalar products corresponding to each particle.
With its help, we can define the Hilbert space of square-integrable wave functions, and we can also work with a rigged Hilbert space, as usually done in \NRQM.

We have seen that the standard formulation of {\NRQM} already contains a way to reduce the wave function of a single particle from the configuration space $\M\times\D$ to a vector-valued wave function on the 3D space $\M$. But for $\n$ particles, this worked only to reduce it to $\M^\n$, as in equation \eqref{eq:universal_wavefunction_field}. In section \sref{s:st_rep} I will show how the reduction to $\M$ can be done for $\n$ particles as well.

\subsection{The Fock space}
\label{s:fock}

For particles of the same type, if they are fermions, only the antisymmetric states are allowed, and if they are bosons, only the symmetric states. They are obtained by (anti)symmetrizing the tensor products of the Hilbert spaces for single particles, resulting in the Fock vector spaces corresponding to each type of particle.

The state space of $\n$ particles of the same type $\D$ is 
\begin{equation}
\label{eq:n_identical}
\F{\D}^\n{}_\pm := \Sym\pm\(\underbrace{\V{\M,\D}\otimes\ldots\otimes\V{\M,\D}}_{\tn{$\n$ times}}\),
\end{equation}
where the operator $\Sym+$ symmetrizes the tensor product, and $\Sym-$ anti-symmetrizes it. For \emph{bosons} $\Sym+$ is used, and for \emph{fermions}, $\Sym-$.

The \emph{Fock space} of particles of type $\D$ is
\begin{equation}
\label{eq:n_identical_fock}
\F{\D}{}_\pm := \bigoplus_{k=0}^\infty \F{\D}^k{}_\pm,
\end{equation}
where $\F{\D}^0{}_\pm\cong\C$ has only one dimension, being spanned by the field representation of the vacuum state. 

While the wave function for $\n$ particles can be seen as a vector-valued wave function defined on $\M^\n$, in the case of the Fock space it is a vector-valued wave function defined on
\begin{equation}
\label{eq:n_identical_config}
\bigsqcup_{k=0}^\infty \M^k,
\end{equation}
for both the fermionic and the bosonic cases, where $\M^0$ has only one point.

In general, the universal wave function is represented as a vector-valued wave function defined on
\begin{equation}
\label{eq:n_identical_config_total}
\underbrace{\(\bigsqcup_{k=0}^\infty \M^k\) \times \ldots \times\(\bigsqcup_{k=0}^\infty \M^k\)}_{\tn{$r$ times}},
\end{equation}
where $r$ is the number of distinct existing types of particles.

\section{The 3D space representation of wave functions}
\label{s:st_rep}

The main result of this section is the following
\begin{theorem}
\label{thm:space_representation}
The space of many-particle wave functions defined on the configuration space (as in section \sref{s:def}) admits a representation as vector fields defined on the $3$D space $\M$.
\end{theorem}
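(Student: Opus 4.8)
The plan is to exhibit an explicit vector-space isomorphism between $\V{\M^\n}$ and a space of vector fields on $\M$, built in three stages that follow the product structure $\V{\M^\n}\cong\V{\M}^{\otimes\n}$: first represent separable (product) states, then quotient out the scaling ambiguity and make the construction local, and finally accommodate arbitrary superpositions by spreading separable basis states across independent layers. Throughout I would use only operations that, applied to bundles over $\M$, return bundles over $\M$ --- namely direct sums and fiberwise operations --- since the earlier discussion shows these preserve the base manifold, whereas the tensor product of field spaces does not.

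First I would treat a separable state $\Psi=\psi_1\otimes\cdots\otimes\psi_\n$ with each $\psi_j\in\V{\M}$. The natural datum is the tuple $(\psi_1,\ldots,\psi_\n)\in\bigoplus_{j=1}^\n\V{\M}$, and by the direct-sum remark this tuple is precisely a field on $\M$ valued in $\C^\n$, i.e.\ a section of the bundle whose fiber over $\x$ is $\C^\n$. The obstruction is that this assignment is not well defined: rescaling $\psi_j\mapsto c_j\psi_j$ with $\prod_j c_j=1$ leaves $\Psi$ unchanged. I would therefore quotient the space of tuples by this $\C^\times$-scaling equivalence; the quotient is what I will call a single \emph{layer}, and it is in bijection with the separable states it represents. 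Since the quotient as stated refers to global constants $c_j$, which threatens local separability, I would then --- mirroring the passage from global to local symmetries in gauge theory --- promote the scaling to act pointwise, so that the equivalence and hence the representation can be read off at each $\x\in\M$ independently, realizing the layer as a genuine field on $\M$ rather than an object defined only globally.

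Third, to reach nonseparable states I cannot simply superpose representatives inside one layer: linear combinations do not descend through the nonlinear quotient, and even when they do they merely yield another separable state. Instead I would fix an orthonormal basis $\{e_i\}$ of the single-particle space $\V{\M}$, so that the product states $e_{i_1}\otimes\cdots\otimes e_{i_\n}$ form a basis of $\V{\M^\n}$, represent each such basis product state in its own layer by the previous step, and take the vector bundle over $\M$ freely generated by these layers (a direct sum of the individual layer bundles, hence again a bundle over $\M$). A general wavefunction $\Psi=\sum c_{i_1\cdots i_\n}\,e_{i_1}\otimes\cdots\otimes e_{i_\n}$ is then sent to the \emph{multi-layered field} whose component in each layer is the corresponding representative scaled by its coefficient. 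This map is manifestly linear; it is injective and surjective because the coefficients $c_{i_1\cdots i_\n}$ can be recovered layer by layer, giving the desired isomorphism, which moreover respects the scalar product \eqref{eq:scalar_prod_one}.

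The main obstacle is the tension between the first and third stages: the scaling quotient needed to make separable decompositions well defined is nonlinear and therefore incompatible with the superposition principle, so the whole difficulty is to arrange that superpositions of separable states are never formed \emph{within} a layer but always \emph{across} independent layers. Getting the bookkeeping of layers right --- enough layers to separate all basis product states, organized as a single freely generated bundle over $\M$ so that the result is still local and separable --- is where the real work lies; the remaining verifications (linearity, bijectivity, and compatibility with the inner product) are routine once the layered bundle is in place.
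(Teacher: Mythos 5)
Your proposal is correct and follows essentially the same route as the paper: quotient tuples in the direct sum $\bigoplus_j\V{\M,\D_j}$ by the scaling group with $\prod_j c_j=1$ to represent separable states, restore local separability by promoting this global equivalence to a gauge symmetry via the associated-bundle construction, and then obtain nonseparable states by fixing a basis of the one-particle space and taking the direct sum of the one-dimensional spaces spanned by the representatives of basis product states, extending by linearity. The only pieces of the paper's proof you omit are refinements rather than new ideas --- the internal degrees of freedom $\D_j$, the vacuum state, and the (anti)symmetrization for identical particles --- and your ``respects the scalar product'' step is in the paper handled by simply \emph{defining} the inner product on the field space through the isomorphism $\rep$.
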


In this section, I gradually develop the proof of this theorem, and the representation on which it is based, starting with two-particle separable states and continuing with increased generality.

\subsection{Separable states}
\label{s:separable}

Let us start with a separable state vector

\begin{equation}
\label{eq:separable_state_two}
\ket{\psi} = \ket{\psi_1}\otimes\ket{\psi_2},
\end{equation}
where $\ket{\psi_1}\in\V{\M,\D_1}$ and $\ket{\psi_2}\in\V{\M,\D_2}$.

Recall from Sec. \sref{s:reduction_to_m_n} that $\psi_1$ and $\psi_2$ are vector-valued wave functions on $\M$. But $\psi$ is not, because even in the case when $\nD{\D_1}=\nD{\D_2}=1$, $\psi(\x_1,\x_2)=\psi_1(\x_1)\psi_2(\x_2)$ is defined on $\M\times\M$. We will see that it can be put in the form of a vector-valued wave function on $\M$.

A naive idea to do this is to think about $\psi$ as a two-component valued wave function, $(\psi_1,\psi_2)$. Such pairs are naturally vector fields from the direct sum bundle $\V{\M,\D_1}\oplus\V{\M,\D_2}$. This would be similar to the Pauli wave function for spin-$\frac 12$ particles, which can be seen as having two components, corresponding to the two possible values for the spin along the $z$ axis.

This may seem not very different from the two-components wave function, and in the case when other degrees of freedom are present, as a wave function with two components, each being a vector in $\D_1$ or $\D_2$. But, in order to see if this can work, we need to address some problems.

The first problem is that there are infinitely many ways to write $\psi$ as a tensor product of two one-particle states (which are representable on $\M$). More precisely, if there are two other wave functions $\psi'_1\in\V{\M,\D_1}$ and $\psi'_2\in\V{\M,\D_2}$ such that $\ket{\psi} = \ket{\psi'_1}\otimes\ket{\psi'_2}$, then there is a complex number $c\in\C_{\neq 0}=\C\setminus\{0\}$, so that 
\begin{equation}
\label{eq:non_uniqueness_separable_state_two}
\begin{split}
\psi'_1 &= c\psi_1 \\
\psi'_2 &= c^{-1}\psi'_2. \\
\end{split}
\end{equation}
If $\ket{\psi_1}\neq0$ and $\ket{\psi_2}\neq0$, then the number $c$ is unique. 
This means though that the representation of $\ket{\psi}$ as a pair $(\psi_1,\psi_2)$ is only ``almost'' unique. 

Fortunately, there is a way to remove this ambiguity, by factoring it out. 
We start with $\V{\M,\D_1}\oplus\V{\M,\D_2}$ to construct a representation of $ \ket{\psi'_1}\otimes\ket{\psi'_2}$ as a vector field over $\M$.

\begin{definition}
\label{def:btimes_equiv}
We first define a binary relation $\sim$ as
\begin{eqnarray}
\label{eq:btimes_equiv}
(\psi_1,\psi_2)\sim(\psi'_1,\psi'_2)\tn{ iff }\exists c\in\C_{\neq 0}\\
\tn{ so that }\psi'_1 = c\psi_1\tn{ and }\psi'_2 = c^{-1}\psi'_2.\nonumber
\end{eqnarray}
In this case, we say that $c$ \emph{ensures} the equivalence $(\psi_1,\psi_2)\sim(\psi'_1,\psi'_2)$. 
\end{definition}

\begin{proposition}
\label{thm:btimes_equiv}
The binary relation $\sim$ from Definition \ref{def:btimes_equiv} is an equivalence relation. 
\end{proposition}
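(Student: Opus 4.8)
The plan is to verify directly that $\sim$ satisfies the three defining properties of an equivalence relation — reflexivity, symmetry, and transitivity — exhibiting in each case the nonzero complex scalar that ensures the relation. The structural reason all three properties hold is that $\C_{\neq 0}$ is an abelian group under multiplication: it contains the identity $1$, is closed under taking inverses, and is closed under products. Each of the three conditions on $\sim$ will correspond to exactly one of these group facts, so the argument is really a translation of the group axioms for $\C_{\neq 0}$ into statements about the pairs $(\psi_1,\psi_2)$.

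Concretely, for reflexivity I would take $c=1$, so that $\psi_1 = 1\cdot\psi_1$ and $\psi_2 = 1^{-1}\psi_2 = \psi_2$, giving $(\psi_1,\psi_2)\sim(\psi_1,\psi_2)$. For symmetry, suppose $c$ ensures $(\psi_1,\psi_2)\sim(\psi'_1,\psi'_2)$, i.e. $\psi'_1 = c\psi_1$ and $\psi'_2 = c^{-1}\psi_2$; then $c^{-1}\in\C_{\neq 0}$ ensures the reverse equivalence, since $\psi_1 = c^{-1}\psi'_1$ and $\psi_2 = (c^{-1})^{-1}\psi'_2$. For transitivity, if $c$ ensures $(\psi_1,\psi_2)\sim(\psi'_1,\psi'_2)$ and $c'$ ensures $(\psi'_1,\psi'_2)\sim(\psi''_1,\psi''_2)$, then the product $c'c\in\C_{\neq 0}$ ensures $(\psi_1,\psi_2)\sim(\psi''_1,\psi''_2)$, because $\psi''_1 = c'\psi'_1 = (c'c)\psi_1$ and $\psi''_2 = (c')^{-1}\psi'_2 = (c'c)^{-1}\psi_2$.

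There is no genuine obstacle here; once the group structure of $\C_{\neq 0}$ is observed, the verification is entirely routine, and the whole content of the proposition is that the scaling ambiguity identified in equation \eqref{eq:non_uniqueness_separable_state_two} is governed by a group action, so that quotienting by it is well defined. The one point I would flag is a typographical slip in Definition \ref{def:btimes_equiv}: the second condition is written $\psi'_2 = c^{-1}\psi'_2$ but is evidently meant to read $\psi'_2 = c^{-1}\psi_2$, and the argument above uses this intended form.
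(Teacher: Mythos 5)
Your proof is correct and follows essentially the same route as the paper's: reflexivity via $c=1$, symmetry via $c^{-1}$, and transitivity via the product of the two scalars. Your observation that the definition's condition $\psi'_2 = c^{-1}\psi'_2$ is a typo for $\psi'_2 = c^{-1}\psi_2$ is accurate and worth noting, but does not change the substance of the argument.
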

\begin{proof}
Since $c=1$ means $(\psi_1,\psi_2)\sim(\psi_1,\psi_2)$, the relation $\sim$ is \emph{reflexive}.
The relation is \emph{symmetric}, because whenever $c$ ensures the equivalence $(\psi_1,\psi_2)\sim(\psi'_1,\psi'_2)$, $c^{-1}$ ensures $(\psi'_1,\psi'_2)\sim(\psi_1,\psi_2)$.
If there is a complex number $c'\neq 0$ ensuring $(\psi_1,\psi_2)\sim(\psi'_1,\psi'_2)$, and a complex number $c''\neq 0$ ensuring $(\psi'_1,\psi'_2)\sim(\psi''_1,\psi''_2)$, it follows that $c'c''$ ensures the equivalence $(\psi_1,\psi_2)\sim(\psi''_1,\psi''_2)$. Therefore, $\sim$ is \emph{transitive}. This proves that $\sim$ is an equivalence relation.
\end{proof}

\begin{definition}
\label{def:btimes}
We define the set
\begin{equation}
\label{eq:btimes}
\V{\M,\D_1} \btimess \V{\M,\D_2} := \(\V{\M,\D_1}\oplus\V{\M,\D_2}\)/\sim.
\end{equation}
Given a two-particle separable state $\ket{\psi_1}\otimes\ket{\psi_2}$, we denote the equivalence class defined by the equivalence relation $\sim$ by $\psi_1\btimes\psi_2:=\left[\psi_1,\psi_2\right]_\sim=\left[\psi_1\oplus\psi_2\right]_\sim$. 
\end{definition}

\begin{remark}
\label{rem:btimess_index}
Here, the index $s$ in the symbol $\btimess$ in $\V{\M,\D_1} \btimess \V{\M,\D_2}$ stands for ``separable'', and when I will generalize to the nonseparable case, the symbol $\btimes$ will be used instead. We will see that they stand for slightly different but related operations. The index $s$ is not present in the symbol $\btimes$ in $\psi_1\btimes\psi_2$ because this field will be essentially the same after the generalization.
\end{remark}

\begin{remark}
\label{rem:btimes_D}
Definitions \ref{def:btimes_equiv} and \ref{def:btimes} may be easier to understand if we consider first the scalar case, $\ket{\psi_1},\ket{\psi_2}\in\V{\M}$. To see how it works in general when $\ket{\psi_1}\in\V{\M,\D_1}$ and $\ket{\psi_2}\in\V{\M,\D_2}$, recall that $\ket{\psi_1}\otimes\ket{\psi_2}$ is expressed as $\psi_1(\x_1,v_1)\psi_2(\x_2,v_2)\in\C$, where $v_j\in\V{\D_j}$, $j\in\{1,2\}$, but also as $\psi_1(\x_1)\psi_2(\x_2)\in\V{\D_1}\otimes\V{\D_2}$. But since $\ket{\psi_1}\otimes\ket{\psi_2}$ is separable, $\psi_1(\x_1)\psi_2(\x_2)$ is also separable as an element of $\V{\D_1}\otimes\V{\D_2}$. Hence, the generalization of the relation $\sim$ from scalar wave functions to $\ket{\psi_1}\in\V{\M,\D_1}$ and $\ket{\psi_2}\in\V{\M,\D_2}$ is straightforward. Alternatively, we can think about the relation $\sim$ in terms of wave functions not on the configuration space $\M^\n$, but on $\M^\n \times \prod_{j=1}^\n\D_j$ (see section \sref{s:reduction_to_m_n}), reducing it again to the scalar wave functions case.
\end{remark}

\begin{remark}
\label{rem:btimes_transf}
Another way to define the equivalence relation from Definition \ref{def:btimes_equiv} is by using the multiplicative one-parameter group $\mathcal{G}_{\nD{\D_1},\nD{\D_2}}$ consisting of matrices of the form
\begin{equation}
\label{eq:btimes_transf}
\(
 \begin{matrix}
  c 1_{\nD{\D_1}} & 0 \\
  0 & c^{-1} 1_{\nD{\D_2}}
 \end{matrix}
\),
\end{equation}
where $c\in\C_{\neq 0}$. The elements of $\V{\M,\D_1} \btimess \V{\M,\D_2}$ are then the orbits of the action of this group on $\V{\M,\D_1}\oplus\V{\M,\D_2}$.
\end{remark}

\begin{remark}
The fields from $\V{\M,\D_1} \btimess \V{\M,\D_2}$ do not form a vector space.
This is easy to check. If $(\psi_1,\psi_2)$ and $(\psi_3,\psi_4)$ are vectors in $\V{\M,\D_1}\oplus\V{\M,\D_2}$, their sum is in general not equivalent to the sum of $(c_1\psi_1,c_1^{-1}\psi_2)$ and $(c_2\psi_3,c_2^{-1}\psi_4)$. In other words, the operation of addition of vectors in $\V{\M,\D_1}\oplus\V{\M,\D_2}$ does not ``survive'' the factorization by $\sim$.
This is to be expected, since $\V{\M,\D_1} \btimess \V{\M,\D_2}$ contains only representations of separable states, while the sum (superposition) of two separable state vectors is usually not a separable state vector. But here we used $\V{\M,\D_1}\oplus\V{\M,\D_2}$ only to represent pairs of vector-valued wave functions, and we are not interested in the vector space structure itself.
\end{remark}

We can now move from two-particle separable states to any $\n$.
\begin{definition}
\label{def:btimes_equiv_n}
We define a binary relation $\sim$ on $\V{\M,\D_1}\oplus\ldots\oplus\V{\M,\D_\n}$ by the following.

If $\n=1$, $(\psi)\sim(\psi')$ iff $\psi=\psi'$.

If $\n>1$,
\begin{equation}
\label{eq:btimes_equiv_n}
(\psi_1,\ldots,\psi_\n)\sim(\psi'_1,\ldots,\psi'_2)
\end{equation}
iff there is a linear transformation of $\V{\M,\D_1}\oplus\ldots\oplus\V{\M,\D_\n}$ of the form
\begin{equation}
\label{eq:btimes_transf_n}
T=
\(
 \begin{matrix}
  c_1 1_{\nD{\D_1}} & 0 & \ldots & 0\\
	\ldots & \ldots & \ldots & \ldots \\
  0 & \ldots & 0 & c_\n 1_{\nD{\D_\n}}
 \end{matrix}
\),
\end{equation}
where $c_1\ldots c_\n=1$,
such that $(\psi'_1,\ldots,\psi'_2) = T\(\psi_1,\ldots,\psi_\n\)$.
In this case, we say that $T$ \emph{ensures} the equivalence \eqref{eq:btimes_equiv_n}.

Let $\mathcal{G}_{\nD{\D_1},\ldots,\nD{\D_\n}}$ be the group of transformations of the form \eqref{eq:btimes_transf_n}.
The orbit
\begin{equation}
\label{eq:btimes_transf_n_orbit}
\mathcal{G}_{\nD{\D_1},\ldots,\nD{\D_\n}}(\psi_1,\ldots,\psi_\n)
\end{equation}
is the equivalence class $\left[\psi_1,\ldots,\psi_\n\right]_\sim$, and we denote it by $\psi_1\btimes\ldots\btimes\psi_\n$. Also, we define $\V{\M,\D_1} \btimess \ldots \btimess \V{\M,\D_\n}:=\(\V{\M,\D_1}\oplus\ldots\oplus\V{\M,\D_\n}\)/\sim$.
\end{definition}

\begin{proposition}
\label{thm:btimes_equiv_n}
The binary relation $\sim$ from Definition \ref{def:btimes_equiv_n} is an equivalence relation. 
\end{proposition}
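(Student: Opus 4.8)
The plan is to recognize that, for $\n>1$, the relation $\sim$ of Definition~\ref{def:btimes_equiv_n} is exactly the orbit equivalence of the action of the matrix set $\mathcal{G}_{\nD{\D_1},\ldots,\nD{\D_\n}}$ on $\V{\M,\D_1}\oplus\ldots\oplus\V{\M,\D_\n}$, precisely as in the two-particle case of Proposition~\ref{thm:btimes_equiv} (cf. Remark~\ref{rem:btimes_transf}). Since the orbit relation of any group action is automatically an equivalence relation, it suffices to verify that $\mathcal{G}_{\nD{\D_1},\ldots,\nD{\D_\n}}$ is genuinely a group under matrix multiplication. This is the one point carrying real content, and it is exactly where the constraint $c_1\cdots c_\n=1$ must be used.

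So first I would check closure: the product of two block-diagonal matrices of the form \eqref{eq:btimes_transf_n}, with diagonal scalars $(c_1,\ldots,c_\n)$ and $(c'_1,\ldots,c'_\n)$ respectively, is again block-diagonal with scalars $(c_1c'_1,\ldots,c_\n c'_\n)$, and the product of these new scalars is $(c_1\cdots c_\n)(c'_1\cdots c'_\n)=1\cdot 1=1$, so the result again lies in $\mathcal{G}_{\nD{\D_1},\ldots,\nD{\D_\n}}$. The identity matrix corresponds to $c_1=\ldots=c_\n=1$, whose product is $1$, and hence it belongs to the set. The inverse of a matrix with scalars $(c_1,\ldots,c_\n)$ is the block-diagonal matrix with scalars $(c_1^{-1},\ldots,c_\n^{-1})$; each $c_j\neq 0$ so this is well defined, and $\prod_j c_j^{-1}=(\prod_j c_j)^{-1}=1$, so it too lies in the set. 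This establishes that $\mathcal{G}_{\nD{\D_1},\ldots,\nD{\D_\n}}$ is a group.

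From here the three axioms follow mechanically, mirroring the proof of Proposition~\ref{thm:btimes_equiv}: reflexivity from the identity (the $T$ with all $c_j=1$ ensures $(\psi_1,\ldots,\psi_\n)\sim(\psi_1,\ldots,\psi_\n)$), symmetry from inverses (if $T$ ensures the equivalence, then $T^{-1}\in\mathcal{G}_{\nD{\D_1},\ldots,\nD{\D_\n}}$ ensures the reverse), and transitivity from closure (if $T'$ and $T''$ ensure two consecutive equivalences, then $T''T'\in\mathcal{G}_{\nD{\D_1},\ldots,\nD{\D_\n}}$ ensures the composite). The degenerate case $\n=1$ must be handled separately, but there $\sim$ is literally equality of wavefunctions and is trivially an equivalence relation. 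I expect no genuine obstacle beyond the closure-and-inverse bookkeeping of the previous paragraph; the only thing to watch is that the determinant-type condition $c_1\cdots c_\n=1$ is preserved under products and inverses, which is exactly what keeps the transformations inside $\mathcal{G}_{\nD{\D_1},\ldots,\nD{\D_\n}}$ rather than merely in the full group of invertible block-scalar matrices.
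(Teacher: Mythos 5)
Your proof is correct and follows essentially the same route as the paper's: reflexivity from the identity matrix, symmetry from the inverse $T^{-1}$, and transitivity from the product $T''T'$. The only difference is that you explicitly verify that the constraint $c_1\cdots c_\n=1$ is preserved under products and inverses (i.e., that $\mathcal{G}_{\nD{\D_1},\ldots,\nD{\D_\n}}$ is genuinely a group), a point the paper's proof asserts without spelling out; this is a welcome bit of extra care rather than a divergence in method.
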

\begin{proof}
The proof is similar to that of the Proposition \ref{thm:btimes_equiv}.

\emph{Reflexivity} follows by taking $T=1_{\nD{\D_1}+\ldots+\nD{\D_\n}}$ (the identity).

If $(\psi_1,\ldots,\psi_\n)\sim(\psi'_1,\ldots,\psi'_\n)$ is ensured by $T$, then $(\psi'_1,\ldots,\psi'_\n)\sim(\psi_1,\ldots,\psi_\n)$ is ensured by $T^{-1}$, so $\sim$ is \emph{symmetric}.

If there are two matrices $T',T''$, as in equation \eqref{eq:btimes_transf_n}, so that $T'$ ensures $(\psi_1,\ldots,\psi_\n)\sim(\psi'_1,\ldots,\psi'_\n)$, and $T''$ ensures $(\psi'_1,\ldots,\psi'_\n)\sim(\psi''_1,\ldots,\psi''_\n)$, it follows that $T''T'$ is from $\mathcal{G}_{\nD{\D_1},\ldots,\nD{\D_\n}}$, and ensures the equivalence $(\psi_1,\ldots,\psi_\n)\sim(\psi''_1,\ldots,\psi''_n)$. Therefore, $\sim$ is \emph{transitive}. This proves that $\sim$ is an equivalence relation.
\end{proof}

\begin{remark}We could have tried to represent tensor product states as vector fields from the tensor products of vector bundles, but this doesn't work for some basic reasons. First, the tensor products of vector spaces of vector fields from two bundles is much larger than the vector space of vector fields from the tensor product of those vector bundles. Second, which is the main reason I didn't use it, if we would use this for the case when one of the vector fields vanishes at regions where the other one does not vanish, their tensor product would vanish, which would lose information contained in the original vector fields.
\end{remark}

Definition \ref{def:btimes_equiv_n} still misses something, which will be added in the following subsection.

\subsection{Local separability}
\label{s:local_separability}

A question that arises is the following: given two regions $A,B\subseteq\M$, and a field representation $\widetilde{\Psi}=\left[\psi_1,\ldots,\psi_\n\right]_\sim$, can we recover $\widetilde{\Psi}|_{A\cup B}$ if we know $\widetilde{\Psi}|_A$ and $\widetilde{\Psi}|_B$? On the one hand, this seems impossible for the wave function on the configuration space, for the simple reason that it is defined on $\M^\n$, and not on $\M$. And trying to do this with field operators in quantum field theory, even in its ``local'' version, doesn't work, because even the vacuum is locally non-separable \cite{SEP-Healey2016-physics-holism}.

To qualify as fields on the $3$D space, the representations we give to the many-particle states have to be defined on $\M$ and valued in some fiber, so they have to be vector fields from a bundle over $\M$. In other words, a field on $\M$ should be defined by its values in all $\x\in\M$. In particular, this would ensure that we can recover $\widetilde{\Psi}|_{A\cup B}$ from its restrictions $\widetilde{\Psi}|_A$ and $\widetilde{\Psi}|_B$.

At first sight, the equivalence classes from Definition \ref{def:btimes_equiv_n} do not seem to work like fields:

\begin{problem}
\label{problem:local_symmetry_problem}
Consider a partition of $\M$, $\M=\bigsqcup_k M_k$, and the restrictions $\widetilde{\Psi}|_{M_k}$ of a field $\widetilde{\Psi}=\left[\psi_1,\ldots,\psi_\n\right]_\sim$ on each $M_k$ (in particular, the partition can be $\M=\bigsqcup_{\x\in\M} \{\x\}$). Then, just by knowing the restrictions $\widetilde{\Psi}|_{M_k}$, it is impossible to recover the original field $\widetilde{\Psi}$.
\end{problem}

On the one hand, since $\ket{\psi_1}\otimes\ket{\psi_2}=(c\ket{\psi_1})\otimes(c^{-1}\ket{\psi_2})$ we had to get rid of the differences between $(\psi_1,\psi_2)$ and $(c\psi_1,c^{-1}\psi_2)$, and this was achieved in Definition \ref{def:btimes} and generalized in Definition \ref{def:btimes_equiv_n}. On the other hand, the representation of states has to be defined pointwisely, to qualify as a field, so it seems that we should keep somehow the difference between $(\psi_1,\psi_2)$ and $(c\psi_1,c^{-1}\psi_2)$. To achieve both of these objectives, we do what we do in gauge theory -- we treat the equivalence classes $\left[\psi_1,\psi_2\right]_\sim$ as the true physical fields, and we keep $(\psi_1,\psi_2)$ as a representation of our field in a particular choice of gauge.

\begin{solution}{Problem}{\ref{problem:local_symmetry_problem}}
\label{solution:local_symmetry_problem}
Now I will explain this in detail, with the help of some standard notions of fiber bundles \cite{NashSen1983TopologyGeometryPhysicists,Nakahara2003GeometryTopologyPhysics,Bleecker2005GaugeTheoryVariationalPP,ROWells2008ComplexManifolds}. We know that to a vector bundle with typical fiber $V$, we can associate a \emph{principal bundle}, with \emph{structure group} $\GL(V)$, the group of linear transformations of $V$. In our case, the vector bundles $\V{\M,\D_j}$ have as typical fibers vector spaces $\V{\D_j}$, so the structure groups for each of these bundles are $\GL(\V{\D_j})$. Since the topology of $\M$ is that of $\R^3$, let's take the associated principal bundle to each $\V{\M,\D_j}$ to simply be the trivial bundle $\M\times\GL(\V{\D_j})$. A \emph{gauge} of the principal bundle $\M\times\GL(\V{\D_j})$ is a \emph{frame field} of $\V{\M,\D_j}$, so if we are given the representation of $\psi\in\Gamma\(\V{\M,\D_j}\)$ in components, we also need to be given the frame field.
In our representation, the structure group is a subgroup of the group $\GL(\V{\D_1}\oplus\ldots\oplus\V{\D_\n})$.
Since here we are not concerned with the linear transformations of the bundles $\V{\D_j}$, we will consider for simplicity that the frame field is fixed everywhere.
We are then free to focus on the action of the group $\mathcal{G}_{\nD{\D_1},\ldots,\nD{\D_\n}}$ from Definition \ref{def:btimes_equiv_n}, since Problem \ref{problem:local_symmetry_problem} comes only from the equivalence \eqref{eq:btimes_transf_n}.

The group $\mathcal{G}_{\nD{\D_1},\ldots,\nD{\D_\n}}$ from Definition \ref{def:btimes_equiv_n} is isomorphic to the commutative group $\mathcal{G}_\n:=\C_{\neq 0}^{\n-1}:=\underbrace{\C_{\neq 0}\times\ldots\times\C_{\neq 0}}_{\tn{${\n-1}$ times}}$, where $\C_{\neq 0}:=\C\setminus\{0\}$ is the multiplicative group of complex numbers. Hence, ignoring for simplicity the linear transformations of each one-particle vector bundle, the principal bundle associated to our representation is $\M\times\mathcal{G}_\n$. 
Each vector field $g(\x)=(c_1(\x),\ldots,c_{\n-1}(\x))$ of the principal bundle $\M\times\mathcal{G}_\n$ acts on the vector bundle $\V{\M,\D_1}\oplus\ldots\oplus\V{\M,\D_\n}$ by a transformation
\begin{equation}
\label{eq:btimes_transf_local_n}
T(g)(\x):=
{\tiny\(
 \begin{matrix}
  c_1(\x) 1_{\nD{\D_1}} & \ldots & 0 & 0\\
	\ldots & \ldots & \ldots & \ldots \\
  0 & \ldots & c_{\n-1}(\x) 1_{\nD{\D_{\n-1}}}& 0 \\
  0 & \ldots & 0 & c^{-1}_1\ldots c^{-1}_{\n-1} 1_{\nD{\D_\n}} \\
 \end{matrix}
\)}.
\end{equation}

To obtain the desired bundle whose vector fields represent the many-particle wave functions, we apply now the standard associated bundle construction.
We take the product bundle $\(\M\times\mathcal{G}_\n\)\times\(\V{\M,\D_1}\oplus\ldots\oplus\V{\M,\D_\n}\)$, and let the group $\mathcal{G}_\n$ act on it by the right action
\begin{eqnarray}
\label{eq:group_action_btimes}
\(p(\x),(\psi_1(\x),\ldots,\psi_\n(\x))\) \qquad\qquad\qquad\qquad\qquad\nonumber\\
:= \(p(\x) g(\x),T(g^{-1})(\x)(\psi_1(\x),\ldots,\psi_\n(\x))\).
\end{eqnarray}

The orbits of this action, $\left[p(\x),(\psi_1(\x),\ldots,\psi_\n(\x))\right]_\sim$, form an associated bundle 
\begin{multline}
\label{eq:btimes_assoc_bundle}
\V{\M,\D_1 \btimess \ldots \btimess \D_\n} := \(\M\times\mathcal{G}_\n\)\times_{T}\(\V{\M,\D_1}\oplus\ldots\oplus\V{\M,\D_\n}\) \\
= \(\(\M\times\mathcal{G}_\n\)\times\(\V{\M,\D_1}\oplus\ldots\oplus\V{\M,\D_\n}\)\)/\mathcal{G}_\n,
\end{multline}
whose base manifold is $\M$. As a fiber, the gauge group plays in the principal bundle the role of a \emph{torsor}, {\ie} we forget its group structure and we keep it as a homogeneous space, as in the case of the frame bundle. This allows us to treat the degrees of freedom of the associated bundle as unphysical, but at the same time in a local separable way.

By this, we have seen that the construction obtained in Definition \ref{def:btimes_equiv_n} is in fact a fiber bundle, and we are justified to consider $\psi_1\btimes\ldots\btimes\psi_\n$ as its vector fields, hence as fields defined on the $3$D space $\M$.

Returning to Problem \ref{problem:local_symmetry_problem}, we see that the right construction was given here, rather than in Definition \ref{def:btimes_equiv_n}, and the restrictions $\widetilde{\Psi}|_{M_k}$ contain not only the equivalence classes, but also the gauge in which they are expressed. This removes the ambiguity and allows to recover $\widetilde{\Psi}$ on the entire $3$D space $\M$ from its restrictions $\widetilde{\Psi}|_{M_k}$. In particular, we can recover $\widetilde{\Psi}$ on the entire $3$D space $\M$ by knowing its values $\widetilde{\Psi}(\x)$. 
\end{solution}

\begin{remark}
\label{rem:btimes_bundle_operations}
One word of caution is in order. The bundle \eqref{eq:btimes_assoc_bundle} is a vector bundle, and carries natural mathematical operations with vector fields. But they do not correspond to physical operations with many-particle wave functions. Once we added the gauge symmetry given by the group $\mathcal{G}_\n$, we can, in principle, add two separable state vectors $(\psi_1,\psi_2)$ and $(\psi_3,\psi_4)$, and obtain another separable state vector $(\psi_1+\psi_3,\psi_2+\psi_4)$. This would break our gauge symmetry, or the equivalence relation from Definition \ref{def:btimes}, since given other representatives $(\psi'_1,\psi'_2)\sim(\psi_1,\psi_2)$ and $(\psi'_3,\psi'_4)\sim(\psi_3,\psi_4)$, in general $(\psi'_1+\psi'_3,\psi'_2+\psi'_4)\nsim(\psi_1+\psi_3,\psi_2+\psi_4)$. The reason why this is not a valid operation is that the Hamiltonian is additive, but only when acting on the equivalence class $\psi_1\btimes\psi_2$, not on its representatives $(\psi_1,\psi_2)$.
And this works for us, because superpositions of separable states are in general non-separable.
\end{remark}

\begin{remark}
\label{rem:btimes_gauge}
Our representation of separable states was obtained \emph{only by using the following operations with vector bundles}: the direct sum, which is a local operation, and an equivalence relation, which is a global operation akin to global symmetries, in the sense that the transformations \eqref{eq:btimes_transf_n} ensuring the equivalence relation are constant over $\M$. In particular, this is similar to changing the phase of a wave function over $\M$ (which is in fact a global gauge transformation, {\eg} for the electron wave function it corresponds to the global $\U(1)$ symmetry of electromagnetism), or to other global symmetries.
We can promote the global transformations \eqref{eq:btimes_transf_n} of $\V{\M,\D_1}\oplus\ldots\oplus\V{\M,\D_\n}$ to local transformations $T$, making the coefficients $c_1,\ldots,c_\n$ depend on the position $\x\in\M$, provided that we keep track of the gauge, $\left[\psi_1,\ldots,\psi_\n\right]_\sim\neq\left[T(\psi_1,\ldots,\psi_\n\right)]_\sim$ in general, if $T$ is a local transformation. But this is not the right way to do it, the right way to do it involves gauge transformations like in equation \eqref{eq:group_action_btimes}.
A formulation allowing \emph{local gauge transformations} of the form \eqref{eq:btimes_transf_local_n} implies, for the differential operators, a new gauge connection. If we want to keep the correspondence with the separable states, this connection has to be flat, its only use being to compensate for the local gauge transformations. But the main point is that the resulting fields are defined pointwisely on the $3$D space $\M$.
Note that, in the presence of electromagnetic interactions, additional transformations appear. They have similar matrix form as \eqref{eq:btimes_transf_n}, but the coefficients $c_j$ are phase factors, the phase change is proportional with the electric charge of each particle type, and the product of all coefficients $c_j$ doesn't have to be $1$ (unless the total charge is $0$). The associated connection, corresponding to the electromagnetic potentials, is not flat. In Definition \ref{def:btimes_equiv_n} I could have avoided the condition that the product of all coefficients $c_j$ is $1$, obtaining a unified treatment of the gauge transformation introduced here to obtain the representation, and those of electromagnetism. I prefer not to do it here because it would complicate the exposition beyond the intended scope of this article.
\end{remark}

So far we have made some progress in representing tensor products of one-particle states as fields defined on the $3$D space.
This representation captures the intuition many researchers have, that somehow separable states are indeed separable, {\ie} they can be seen as separate wave functions on the $3$D space. But we have seen that this construction was not straightforward, because $\n$ one-particle wave functions contain more information than their product state. So we had to factor out this redundancy, which led to further complications, since factoring it out makes the fields to lose local separability. To restore it, we still take the separable states as consisting of $\n$ one-particle wave functions, and the redundancy is interpreted as a new symmetry, like local gauge symmetry (but without associated interactions).

To prove the full equivalence with the tensor products, {\ie}, to also include nonseparable states, we still need some work, which is done in the following subsection. Unfortunately, this again is not straightforward.

\subsection{More about separable fields}
\label{s:more_separable}

I will now establish some properties and operations with the fields introduced in \sref{s:separable} to represent separable states.

\subsubsection{One dimensional vector space of fields}
\label{s:one_dim_hilbert}

Factoring by the equivalence relation $\sim$ does not preserve all the vector space operations (Remark \ref{rem:btimes_bundle_operations}). But it is not necessary to preserve them, since the sum of separable states in general is not separable. Moreover, the equivalence relation $\sim$ does commute with some of the vector space operations, just the way we need, as we shall see.

\begin{definition}
\label{def:btimes_scalar_mult}
On the set $\V{\M,\D_1}\btimess\ldots\btimes\V{\M,\D_\n}$, we define the scalar multiplication with a complex number $c\in\C$ by 
\begin{equation}
\label{eq:btimes_scalar_mult}
c\psi_1\btimes\ldots\btimes\psi_\n := (c\psi_1)\btimes\ldots\btimes\psi_\n.
\end{equation}
Multiplying by $c\neq 1$ changes the equivalence class $\psi_1\btimes\ldots\btimes\psi_\n$, so this operation is well defined. 
We say that $\psi_1\btimes\ldots\btimes\psi_\n$ and $c\psi_1\btimes\ldots\btimes\psi_\n$ are \emph{collinear}.
Let $\Span\(\psi_1\btimes\ldots\btimes\psi_\n\)$ be the one-dimensional vector space spanned by $\psi_1\btimes\ldots\btimes\psi_\n$ by multiplications with scalars.
\end{definition}

Definition \ref{def:btimes_equiv_n} allows us to move the scalar $c\in\C_{\neq0}$ among the factors, $(c\psi_1)\btimes\psi_2\btimes\ldots\btimes\psi_\n\sim\psi_1\btimes(c\psi_2)\btimes\ldots\btimes\psi_\n\sim\ldots\sim\psi_1\btimes\psi_2\btimes\ldots\btimes(c\psi_\n)$. 

\begin{definition}
\label{def:btimes_collinear_addition}
We can even add fields, provided that their equivalence classes are collinear.
\begin{equation}
\label{eq:btimes_collinear_addition}
c_1\psi_1\btimes\ldots\btimes\psi_\n + c_2\psi_1\btimes\ldots\btimes\psi_\n := (c_1+c_2)\psi_1\btimes\ldots\btimes\psi_\n.
\end{equation}
\end{definition}

These operations turn the set of all collinear fields into a one-dimensional vector space. This very simple observation will turn out to be very useful in the following.

\subsubsection{Recursivity and associativity}
\label{s:recursivity_associativity}

\begin{remark}
\label{rem:recursivity}
Due to the operations defined in \sref{s:one_dim_hilbert}, we can apply Definition \ref{def:btimes_equiv_n} \emph{recursively}, allowing $\psi_j$ to be not only one-particle wave functions, but also many-particle separable states. The reason is that, in Definition \ref{def:btimes_equiv_n}, only direct sums and scalar multiplications are used.
This makes possible to talk about associativity, which I will extend now to the proposed field representation.
\end{remark}

\begin{proposition}
\label{thm:btimes_equiv_assoc}
Let $\psi_1,\psi_2,\psi_3$ be one-particle wave functions or many-particle separable states. Then, $(\psi_1\btimes\psi_2)\btimes\psi_3=\ket{\psi_1}\btimes(\ket{\psi_2}\btimes\psi_3)=\psi_1\btimes\psi_2\btimes\psi_3$.
\end{proposition}
\begin{proof}
Consider any two complex numbers $c_1,c_2\in\C_{\neq 0}$.

Then, for the identity
$(\psi_1\btimes\psi_2)\btimes\psi_3=\psi_1\btimes\psi_2\btimes\psi_3$,
\begin{align*}
((\psi_1,\psi_2),\psi_3) & \sim (c_2(c_1\psi_1,c_1^{-1}\psi_2),c_2^{-1}\psi_3) \\
& = (c_2c_1\psi_1,c_1^{-1}\psi_2,c_2^{-1}\psi_3). \\
\end{align*}

If we make the notation $c_1'=c_1c_2$, $c_2'=c_1^{-1}$, and $c_3'=c_2^{-1}$, we obtain $c_1'c_2'c_3'=1$. 
Then, we can also solve for $c_1=c_2'{}^{-1}$, $c_2=c_3'{}^{-1}$, and $c_3=c_1'{}^{-1}$.
Then, the pairs $(c_1,c_2)$ are in one-to-one correspondence with triples $(c_1',c_2',c_3')$ so that $c_1'c_2'c_3'=1$, which proves that 
\begin{equation*}
((\psi_1,\psi_2),\psi_3) \sim (\psi_1,\psi_2,\psi_3),
\end{equation*}
hence $(\psi_1\btimes\psi_2)\btimes\psi_3=\psi_1\btimes\psi_2\btimes\psi_3$.

The identity $\psi_1\btimes(\psi_2\btimes\psi_3)=\psi_1\btimes\psi_2\btimes\psi_3$ follows similarly.
\end{proof}

\begin{remark}
Since the operation $\btimes$ is associative, it is therefore convenient to drop the brackets.
This property extends immediately to a general number of factors, because it applies to one-particle as well as separable many particle states as well, and takes us closer to the relation with the tensor products of quantum states.
\end{remark}

\subsection{Nonseparable states}
\label{s:nonseparable}

Recall that the quotient set $\V{\M,\D_1} \btimess \ldots \btimess \V{\M,\D_\n}$ defined in \eqref{eq:btimes_equiv} is not a vector space. 
In \sref{s:one_dim_hilbert} we have seen that we can identify collinear fields, and they form one-dimensional vector spaces. But the representation of the nonseparable states has to be obtained, as linear combinations of separable states. It is predictable by now that, in order to achieve this, we can simply build sums of fields representing separable states. While it is not as straightforward, it is easy.

The main problems to be solved by our construction are:
\begin{enumerate}
	\item 
We cannot simply take direct sums of all such bundles, because they may be \emph{redundant}. The cause of this redundancy is that not all linear combinations of separable state vectors are not separable, for example $\ket{\psi_1}\otimes\ket{\psi_2}+\ket{\psi_1}\otimes\ket{\psi_2'} = \ket{\psi_1}\otimes\(\ket{\psi_2}+\ket{\psi_2'}\)$ is separable.
	\item 
The resulting operation of addition used to represent nonseparable states as superpositions of separable states has to be \emph{commutative}.
The direct sum of vector bundles is commutative, in the sense that given two vector bundles $E_1$ and $E_2$, $E_1\oplus E_2$ and $E_2\oplus E_1$ are isomorphic, but the direct sum of two particular vector fields from these bundles is not commutative. This is in fact already clear when we take the direct sum of vectors, since $(v_1,v_2)\neq(v_2,v_1)$, so $v_1\oplus v_2\neq v_2\oplus v_1$. The way commutativity works in a direct sum vector space is rather $(v_1,0)+(0,v_2) = (0,v_2)+(v_1,0)$, but these vectors are from the direct sum, not from the original vector spaces taking part in the sum. The fact that $(v_1,v_2)\neq(v_2,v_1)$ allowed us to define the operation $\btimes$ in the first place, but for addition we need to be careful to ensure commutativity.
\end{enumerate}

One way to avoid these problems, which is not used here, is to take all possible direct sums of bundles representing separable states, then identify what kinds of fields represent the same quantum states, and then factor out the redundancy, and do this in a way to ensure commutativity of addition. 

The method used in the following avoids the redundancy from the very beginning and obtains commutativity automatically.

\begin{remark}
\label{rem:tensor_product_basis}
Recall that given the vector spaces $V_1,\ldots,V_n$, and a basis $\(e_1^{(k)},\ldots,e_{\dim V_k}^{(k)}\)$ for each $V_k$, then $\(e_{j_1}^{(1)}\otimes\ldots\otimes e_{j_n}^{(n)}\)_{j_1=1\tn{ to }\dim V_1,\ldots,j_n=1\tn{ to }\dim V_n}$ is a basis of the tensor product $V_1\otimes\ldots\otimes V_n$.
\end{remark}

Now, let us build the space of all possible fields representing many-particle states, where the types of distinct particles are $\D_1,\ldots,\D_\n$.
The steps of the construction are as following:
\begin{enumerate}
	\item
	Let $\V\M$ be the vector space of scalar functions on $\M$, and $\(\xi_{\alpha}\)$ a basis of $\V\M$, indexed by $\alpha$. 
	Let $\(\db_{(j)}^{k}\)_k=\(\db_{(j)}^{1},\ldots,\db_{(j)}^{\nD{\D_j}}\)$ be a basis of $\V{\D_j}$, indexed by $k\in\{1,\ldots,\nD{\D_j}\}$, for each type of particle $\D_j$.
	Then, if  $\xi_{\alpha k}^{(j)} := \xi_{\alpha}\db_{(j)}^{k}$,
	\begin{equation}
	\label{eq:basis_nonseparable_field}
	\(\xi_{\alpha k}^{(j)}\)
	\end{equation}
	is a basis of $\V{\M,\D_j} := \V\M\otimes\V{\D_j}$, indexed by $\alpha$ and $k$, for each type of particle $\D_j$.
	\item
	For each $\n\geq 1$, construct the fields representing $\n$ particles, of the form 
	\begin{equation}
	\label{eq:basis_many_particle}
	\xi_{\alpha k_1}^{(j)}\btimes\ldots\btimes\xi_{\alpha k_\n}^{(j)} := \left[\xi_{\alpha k_1}^{(j)},\ldots,\xi_{\alpha k_\n}^{(j)}\right]_\sim,
	\end{equation}
	as in Definition \ref{def:btimes_equiv_n}, for all possible elements of the bases and all types of particles.
	\item
	Form the direct sum of all one-dimensional vector spaces spanned by vectors of the form $\xi_{\alpha k_1}^{(j)}\btimes\ldots\btimes\xi_{\alpha k_\n}^{(j)}$.
\end{enumerate}

\begin{remark}
\label{rem:nonseparable}
This construction relies only on direct sums of bundles as in the case of separable states. Since the fields representing separable states are defined on the $3$D space, the direct sums taken here are also defined on the $3$D space. Due to Remark \ref{rem:tensor_product_basis}, the construction provides a faithful representation of the space of many-particle wave functions of all types from the possible types $\D_1,\ldots,\D_\n$, as fields on the $3$D space. The discussion in \sref{s:local_separability} applies in this case too.
It is clear that the fiber bundle defined like this is very complicated and the fibers are infinite dimensional. This is to be expected, because otherwise we could not represent the many-particle wave functions, normally defined on the configuration space, as fields on just a $3$-dimensional space.
\end{remark}

\begin{remark}
\label{rem:nonseparable_commutativity}
The commutativity of addition follows now automatically, since we identify the vector bundles representing separable states as being components of the direct sum of all one-dimensional vector spaces spanned by the vector $\xi_{\alpha k_1}^{(j)}\btimes\ldots\btimes\xi_{\alpha k_\n}^{(j)}$. As explained already, it is similar to the difference between $v_1\oplus v_2\neq v_2\oplus v_1$ (non-commutativity) and $(v_1,0)+(0,v_2)=(0,v_2)+(v_1,0)$ (commutativity).
\end{remark}

How do we represent a generic $\n$-particle wave function? In particular, if $\psi_1,\ldots,\psi_n$ are $\n$ one-particle wave functions, how do we represent the field $\left[\psi_1,\ldots,\psi_n\right]_\sim$?
Let $\D_{j}$ be the type of each $\psi_j$. We express each $\psi_j$ in the basis $\(\xi_{\alpha k_{(j)}}^{(j)}\)$ of its space of functions $\V{\M,\D_{j}}$,
\begin{equation}
\label{eq:basis_expression_psi}
\psi_j=\sum_{k_{(j)}} c^{(j)}_{\alpha_{k_{(j)}}} \xi_{\alpha k_{(j)}}^{(j)},
\end{equation}
where the coefficients $c^{(j)}_{\alpha_{k_{(j)}}}$ are complex numbers.
Then, we \emph{define}
\begin{equation}
\label{eq:basis_expression_psi_btimes}
{\psi_1}\btimes\ldots\btimes{\psi_\n} := \sum_{k_{(1)}}\ldots\sum_{k_{(\n)}} c^{(1)}_{\alpha_{k_{(1)}}}\ldots c^{(\n)}_{\alpha_{k_{(\n)}}} \xi_{\alpha k_{(1)}}^{(1)}\btimes\ldots\btimes\xi_{\alpha k_\n}^{(\n)}.
\end{equation}

\begin{remark}
\label{rem:justification_basis_expression_psi_btimes}
Let me emphasize what I did here. I first defined the operation $\btimes$ as in Definition \ref{def:btimes_equiv_n}, but only on the elements of the basis, which was chosen from the beginning. Then, I extended the operation $\btimes$ to more general separable states, which are linear combinations of the fields corresponding to tensor products of elements of the bases. Alternatively, I could have defined the operation $\btimes$ for all separable states, then impose equation \eqref{eq:basis_expression_psi_btimes} as an equivalence relation, and then take the equivalence classes. The result would have been the same, regardless if we take equation \eqref{eq:basis_expression_psi_btimes} as an identity, or as an equivalence relation. The procedure I chose is in fact similar to the multiplication with a scalar from Definition \ref{eq:btimes_scalar_mult}. In both cases, I took advantage of the freedom that some operations are not defined, and I defined them to connect some fields which were previously independent.
\end{remark}

\begin{proposition}
\label{thm:btimes_distributive}
The operation $\btimes$ is distributive over the addition,
\begin{align}
\label{eq:btimes_distributive}
{\psi_1} \btimes \({\psi_2} + {\psi_3}\) &= {\psi_1} \btimes {\psi_2} + {\psi_1} \btimes {\psi_3} \\
\({\psi_2} + {\psi_3}\) \btimes {\psi_1} &= {\psi_2} \btimes {\psi_1} + {\psi_3} \btimes {\psi_1},
\end{align} 
where ${\psi_1}\in\V{\M,\D_1}$ and ${\psi_2},{\psi_3}\in\V{\M,\D_2}$.
\end{proposition}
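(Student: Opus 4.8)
The plan is to reduce both identities directly to the defining formula \eqref{eq:basis_expression_psi_btimes}, since that formula already makes $\btimes$ multilinear in the basis coefficients of its arguments. First I would fix a basis $\(\xi_{\alpha k_{(1)}}^{(1)}\)$ of $\V{\M,\D_1}$ and a basis $\(\xi_{\alpha k_{(2)}}^{(2)}\)$ of $\V{\M,\D_2}$, as constructed in \sref{s:nonseparable}, and expand $\psi_1=\sum_{k_{(1)}} c^{(1)}_{k_{(1)}}\xi_{\alpha k_{(1)}}^{(1)}$, $\psi_2=\sum_{k_{(2)}} c^{(2)}_{k_{(2)}}\xi_{\alpha k_{(2)}}^{(2)}$, and $\psi_3=\sum_{k_{(2)}} d^{(2)}_{k_{(2)}}\xi_{\alpha k_{(2)}}^{(2)}$. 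Since $\psi_2$ and $\psi_3$ lie in the same space $\V{\M,\D_2}$, their sum is taken coefficientwise, $\psi_2+\psi_3=\sum_{k_{(2)}}\(c^{(2)}_{k_{(2)}}+d^{(2)}_{k_{(2)}}\)\xi_{\alpha k_{(2)}}^{(2)}$.

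Next I would apply \eqref{eq:basis_expression_psi_btimes} to the left-hand side $\psi_1\btimes\(\psi_2+\psi_3\)$, obtaining a double sum over $k_{(1)}$ and $k_{(2)}$ in which each term carries the scalar $c^{(1)}_{k_{(1)}}\(c^{(2)}_{k_{(2)}}+d^{(2)}_{k_{(2)}}\)$ multiplying the basis field $\xi_{\alpha k_{(1)}}^{(1)}\btimes\xi_{\alpha k_{(2)}}^{(2)}$. The key step is then to distribute: ordinary distributivity of complex multiplication splits each scalar as $c^{(1)}_{k_{(1)}}c^{(2)}_{k_{(2)}}+c^{(1)}_{k_{(1)}}d^{(2)}_{k_{(2)}}$, and the collinear addition of Definition \ref{def:btimes_collinear_addition}, together with the direct-sum structure of the field space (Remark \ref{rem:nonseparable}), lets me regroup the double sum into two double sums. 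Reading \eqref{eq:basis_expression_psi_btimes} backwards, these two sums are exactly $\psi_1\btimes\psi_2$ and $\psi_1\btimes\psi_3$, which gives the first identity. The second identity follows from the identical computation carried out in the first slot; it cannot be obtained from the first by commutativity, since $\btimes$ is not commutative, but the defining formula \eqref{eq:basis_expression_psi_btimes} treats every factor by the same multilinear rule, so distributivity in the first slot follows by the same regrouping.

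The only genuine subtlety, and hence the step I would treat most carefully, is the well-definedness of the additions being manipulated. The right-hand sides involve summing fields that are in general not collinear, so Definition \ref{def:btimes_collinear_addition} by itself does not cover them; what legitimizes the regrouping is that each basis field $\xi_{\alpha k_{(1)}}^{(1)}\btimes\xi_{\alpha k_{(2)}}^{(2)}$ spans its own one-dimensional summand of the ambient direct sum (Remark \ref{rem:nonseparable}), so that addition of distinct basis fields is the direct-sum addition while addition of collinear ones is Definition \ref{def:btimes_collinear_addition}. Once this bookkeeping is fixed, distributivity reduces to the bilinearity of the coefficient product, and no further computation is needed.
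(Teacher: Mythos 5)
Your proof is correct and follows the same route as the paper, which simply states that the result ``follows from equation \eqref{eq:basis_expression_psi_btimes}''; you have supplied the basis expansion, the coefficientwise distribution, and the bookkeeping about collinear versus direct-sum addition that the paper leaves implicit. Your observation that the second identity must be argued in the first slot rather than deduced by commutativity is a worthwhile clarification, but it does not change the method.
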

\begin{proof}
This follows from equation \eqref{eq:basis_expression_psi_btimes}.
\end{proof}

\begin{remark}
\label{rem:nonseparable_other_basis}
This construction relies on choosing a particular basis for each one-particle Hilbert space, but the identity \eqref{eq:basis_expression_psi_btimes} allows us to change the basis. This makes the construction independent on the basis we choose.
\end{remark}

\begin{definition}
\label{def:btimes_space}
The fields of the form
\begin{equation}
\label{eq:btimes_vector_spaces}
\sum_{k_{(1)}}\ldots\sum_{k_{(\n)}} c_{\alpha_{k_{(1)}}\ldots \alpha_{k_{(\n)}}} \xi_{\alpha k_{(1)}}^{(1)}\btimes\ldots\btimes\xi_{\alpha k_\n}^{(\n)}
\end{equation}
with $c_{\alpha_{k_{(1)}}\ldots \alpha_{k_{(\n)}}}\in\C$, form a vector space, which we denote by $\V{\M,\D_{1}} \btimes \ldots \btimes \V{\M,\D_{\n}}$.
\end{definition}

\begin{definition}
\label{def:btimes_space_rep_partial}
We denote by
\begin{equation}
\label{eq:btimes_space_rep_partial}
\begin{cases}
\rep:\V{\M,\D_{1}} \otimes \ldots \otimes \V{\M,\D_{\n}} \to \V{\M,\D_{1}} \btimes \ldots \btimes \V{\M,\D_{\n}} \\
\rep(\ket{\psi_1}\otimes\ldots\otimes\ket{\psi_n}) = {\psi_1}\btimes\ldots\btimes{\psi_n} \\
\end{cases}
\end{equation}
(extended by linearity)
the faithful representation of the tensor product space $\V{\M,\D_{1}} \otimes \ldots \otimes \V{\M,\D_{\n}}$ on the vector space $\V{\M,\D_{1}} \btimes \ldots \btimes \V{\M,\D_{\n}}$ from Definition \ref{def:btimes_space}.
\end{definition}

The last step is to extend the representation \eqref{eq:btimes_space_rep_partial} to the direct sum $\mathcal{V}$ of all spaces of the form $\V{\M,\D_{1}} \otimes \ldots \otimes \V{\M,\D_{\n}}$, which is almost immediate. ``Almost'', because we also have to include in the direct sum the one-particle spaces, which are just the spaces $\V{\M,\D_{k}}$, and the vacuum state space.

Definition \ref{def:btimes_space} doesn't apply to the vacuum state, so its representation needs to be defined separately. It has to be the same for all types of particles. It generates a one-dimensional vector space, and it is independent on the type of particles. So the field representing it has to be a scalar field, with no relation to the internal spaces $\D_k$. It has to be invariant to isometries of the $3$D space $\M$, so it has to be constant. So we take the field representation of the vacuum state as being identically $1$. But in this case it is not square-integrable. Fortunately, the Hermitian scalar product on this space will be induced by that of the usual vacuum state space, in equation \eqref{eq:hermitian_scalar_prod_fields}. This makes sense, because the vacuum field space contains only constant functions, while the usual square integral doesn't apply.

\begin{definition}
\label{def:btimes_space_rep}
Let $\mathcal{V}$ be the direct sum of all state spaces of the form $\V{\M,\D_{1}} \otimes \ldots \otimes \V{\M,\D_{\n}}$, and $\widetilde{\mathcal{V}}$ the direct sum of all field spaces of the form $\V{\M,\D_{1}} \btimes \ldots \btimes \V{\M,\D_{\n}}$ (including the case of one particle and the vacuum). Then, we extend the representation from Definition \ref{def:btimes_space_rep_partial} by
\begin{equation}
\label{eq:btimes_space_rep}
\begin{cases}
\rep:\mathcal{V} \to \widetilde{\mathcal{V}} \\
\rep(\sum_j\ket{\Psi_j}) = \sum_j\rep(\ket{\Psi_j}), \\
\end{cases}
\end{equation}
where each $\rep\(\ket{\Psi_j}\)\stackrel{\tn{notation}}{=}\ket{\widetilde{\Psi_j}}$ belongs to a field space of the form $\V{\M,\D_{1}} \btimes \ldots \btimes \V{\M,\D_{\n}}$, including the one particle spaces, or is the vacuum.
\end{definition}

Again, the fields from this representation are defined on the $3$D space $\M$.

We also need to define the Hermitian scalar product on $\widetilde{\hilbert}$. To do this, we simply use the representation \eqref{eq:btimes_space_rep}. That is, for $\ket{\widetilde{\Psi}_1},\ket{\widetilde{\Psi}_2}\in\widetilde{\hilbert}$, 
\begin{equation}
\label{eq:hermitian_scalar_prod_fields}
\braket{\widetilde{\Psi}_1}{\widetilde{\Psi_2}} := \braket{\rep^{-1}(\widetilde{\Psi}_1)}{\rep^{-1}(\widetilde{\Psi}_2)}.
\end{equation}

Note that the field vector spaces $\V{\M,\D_{j}}$ can be replaced in the representation by any subspaces, in particular by the Hilbert spaces of square integrable fields, which we will denote by $\hilbert_{\D_{j}}$. In this case, let us denote by $\hilbert$ the direct sum of all spaces of the form $\hilbert_{\D_{1}} \otimes \ldots \otimes \hilbert_{\D_{\n}}$, and by $\widetilde{\hilbert}$ the direct sum of all field spaces of the form $\hilbert_{\D_{1}} \btimes \ldots \btimes \hilbert_{\D_{\n}}$ (including the case of one particle and the vacuum).

Similarly to equation \eqref{eq:hermitian_scalar_prod_fields}, the isomorphism \eqref{eq:btimes_space_rep} allows us to associate, to any operator $\hat{\mathcal{A}}$ on $\hilbert$ (or $\mathcal{V}$), an operator $\rep(\hat{\mathcal{A}})$ on $\widetilde{\hilbert}$ (or $\widetilde{\mathcal{V}}$), by
\begin{equation}
\label{eq:operators_fields}
\rep(\hat{\mathcal{A}})\ket{\widetilde{\Psi}_1} := \rep\(\hat{\mathcal{A}}\rep^{-1}(\ket{\widetilde{\Psi}_1})\),
\end{equation}
for any $\ket{\widetilde{\Psi}_1}\in\widetilde{\hilbert}$  (or $\widetilde{\mathcal{V}}$).
The resulting operators are linear on $\widetilde{\hilbert}$ (or $\widetilde{\mathcal{V}}$), and if $\hat{\mathcal{A}}$ is Hermitian or unitary, so is $\rep(\hat{\mathcal{A}})$.

\begin{corollary}
\label{thm:operators_as_fields}
Linear operators on $\mathcal{V}$ admit, via the representation $\rep$, a representation as fields on the $3$D space $\M$. This applies in particular to operators on $\hilbert$.
\end{corollary}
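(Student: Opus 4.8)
The plan is to observe that the Corollary is an immediate consequence of the fact that $\rep$ is a vector-space isomorphism between $\mathcal{V}$ and $\widetilde{\mathcal{V}}$ (equivalently between $\hilbert$ and $\widetilde{\hilbert}$), which is exactly what the construction culminating in Definition \ref{def:btimes_space_rep} established. Given any linear operator $\hat{\mathcal{A}}$ on $\mathcal{V}$, I would transport it by conjugation, setting $\rep(\hat{\mathcal{A}}) := \rep\circ\hat{\mathcal{A}}\circ\rep^{-1}$, which is precisely the prescription of equation \eqref{eq:operators_fields}. Because the elements of $\widetilde{\mathcal{V}}$ are, by construction, fields defined pointwise on the $3$D-space $\M$ (as shown in \sref{s:local_separability} and \sref{s:nonseparable}), the operator $\rep(\hat{\mathcal{A}})$ acts on fields over $\M$, which is exactly the sense in which $\hat{\mathcal{A}}$ is represented as acting on fields on the $3$D-space.

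First I would check well-definedness and linearity. Since $\rep$ is a bijection, $\rep^{-1}$ exists; and since both $\rep$ and $\rep^{-1}$ are linear (being a vector-space isomorphism and its inverse), the composition $\rep\circ\hat{\mathcal{A}}\circ\rep^{-1}$ is linear whenever $\hat{\mathcal{A}}$ is, with domain and codomain $\widetilde{\mathcal{V}}$. Thus it is a well-defined linear operator on the space of fields over $\M$. The same argument applies verbatim to operators on $\hilbert$, using the restriction of $\rep$ to the square-integrable subspaces and its image $\widetilde{\hilbert}$, which yields the final sentence of the statement.

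Next I would record the preservation of structure. Because the Hermitian scalar product on $\widetilde{\hilbert}$ was defined in equation \eqref{eq:hermitian_scalar_prod_fields} precisely so that $\rep$ is an isometry, conjugation by $\rep$ commutes with taking adjoints, so $\rep(\hat{\mathcal{A}})^\dagger = \rep(\hat{\mathcal{A}}^\dagger)$. Consequently, if $\hat{\mathcal{A}}$ is Hermitian then so is $\rep(\hat{\mathcal{A}})$, and if $\hat{\mathcal{A}}$ is unitary then so is $\rep(\hat{\mathcal{A}})$. This is the observation stated just before the Corollary, and it rounds out the claim by showing that the physically relevant operator classes survive the representation.

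The main obstacle lies not in the Corollary itself but in the result it leans on, namely that $\rep$ is genuinely faithful, i.e.\ an isomorphism, which is exactly what \sref{s:nonseparable} and Remark \ref{rem:nonseparable} secure via the tensor-product basis of Remark \ref{rem:tensor_product_basis}. Granting that, the Corollary is essentially bookkeeping: everything reduces to conjugating operators by a known isomorphism and noting that the target space consists of fields on $\M$.
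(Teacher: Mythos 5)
Your argument is correct as far as it goes, but it proves a different statement from the one the paper's own proof establishes, and the difference is exactly the point of the Corollary. What you show is that any linear operator $\hat{\mathcal{A}}$ on $\mathcal{V}$ transports, by conjugation with the isomorphism $\rep$, to a linear operator on $\widetilde{\mathcal{V}}$, i.e.\ to an operator \emph{acting on} fields over $\M$. That fact is already recorded in equation \eqref{eq:operators_fields} and the sentences immediately preceding the Corollary; restating it does not add content. The Corollary, as the paper proves it, claims something stronger: that the operators themselves admit a representation \emph{as} fields on $\M$. The paper's proof goes through the canonical identification of linear operators with elements of the tensor product $\mathcal{V}\otimes\mathcal{V}^\ast$ (this is item 8 in the list of vector-space operations in \sref{s:vbundle}) and then applies the same multi-layered field construction of \sref{s:nonseparable} to that tensor product, so that an operator becomes an element of a field space over $\M$ --- a geometric object living on the $3$D-space, on the same footing as the states. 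Your closing sentence, which reinterprets ``represented as fields'' to mean ``represented as acting on fields,'' quietly substitutes the weaker reading; an operator on a space of fields over $\M$ is not itself a field over $\M$ without the extra $\mathcal{V}\otimes\mathcal{V}^\ast$ step.

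This distinction matters for the paper's program: the goal of section \sref{s:implications} is a $3$D-space ontology in which \emph{everything} --- states, Hamiltonians, projectors --- is localized on $\M$, and that requires the operators to be fields, not merely maps between fields. To repair your proof you would keep your first paragraph as motivation, then add: since $\End(\mathcal{V})\cong\mathcal{V}\otimes\mathcal{V}^\ast$, and since the construction of \sref{s:nonseparable} represents tensor products of the constituent spaces (and their duals) as fields on $\M$, the operator itself acquires a field representation on $\M$; conjugation by $\rep$ then tells you how this field acts on the field representations of states. (Your remarks on well-definedness, linearity, and the preservation of Hermiticity and unitarity under \eqref{eq:hermitian_scalar_prod_fields} are correct and consistent with the paper, but they support the preliminary discussion before the Corollary rather than the Corollary itself.)
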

\begin{proof}
Since $\mathcal{V}$ is a vector space, linear operators acting on them are elements of the tensor product $\mathcal{V}\otimes\mathcal{V}^\ast$. But then, we can represent them as local fields on $\M$, just like we did with the tensor product states. Therefore, linear operators have a local field representation.
\end{proof}

Recall that the $\n$-particles states of the same type $\D$ are not simply vectors from the tensor product space $\bigotimes^\n\V{\M,\D}$, but either from its symmetrized or its antisymmetrized (or alternating) tensor product, as in equation \eqref{eq:n_identical}:
\begin{equation}
\tag{\ref{eq:n_identical}'}
\F{\D}^\n{}_\pm := \Sym\pm\(\underbrace{\V{\M,\D}\otimes\ldots\otimes\V{\M,\D}}_{\tn{$\n$ times}}\),
\end{equation}

The symmetry or antisymmetry conditions make sense also in the case of the operation $\btimes$, so fermions and bosons will be represented as fields with the appropriate symmetries,
\begin{equation}
\label{eq:n_identical_field}
\widetilde{\F{\D}}^\n{}_\pm := \Sym\pm\(\underbrace{\V{\M,\D}\btimes\ldots\btimes\V{\M,\D}}_{\tn{$\n$ times}}\),
\end{equation}
where the operator $\Sym+$ symmetrizes the operation $\btimes$, and $\Sym-$ anti-symmetrizes it.
So our Hilbert space $\hilbert$ has to include such fermionic and bosonic states, and the representation \eqref{eq:btimes_space_rep} will take care that the corresponding fields have the right symmetries \eqref{eq:n_identical_field}.

This concludes the proof of Theorem \ref{thm:space_representation}.

\section{Dynamics and locality}
\label{s:locality}

We have seen that many-particle quantum states can be represented as fields on the $3$D space. These fields are similar to classical fields, but much more complex, in order to represent the degrees of freedom of quantum states. Now we will see that, as long as no measurement occurs and the dynamics is governed only by unitary evolution, their evolution is local in the $3$D space.

The Hamiltonian for $n$ particles in {\NRQM} has the form
\begin{equation}
\label{eq:schrod_hamiltonian}
\hat{H} = -\sum_j\frac{\hbar^2}{2m_j}\nabla_{\x_j}^2 + \sum_{j\neq k}V\(\x_j,\x_k\).
\end{equation}
In general, the potential depends on the $3$D distance between $\x_j$ and $\x_k$, so $V\(\x_j,\x_k\)=V\(|\x_k-\x_j|^2\)$, where $|\x_k-\x_j|$ is the $3$D norm.
Generalizations that include spin or other degrees of freedom can be put in similar form.

In the absence of interactions, there is no indication in the Hamiltonian about the dimension of the space on which the wave function is defined, but the potential indicates three space dimensions, because $V$ depends on $3$D distances  $|\x_k-\x_j|$ rather than $3\n$-dimensional distances in the configuration space. This suggests that the dynamics is in some sense $3$-dimensional, even though the wave function is defined on the configuration space \cite{DavidAlbert1996ElementaryQuantumMetaphysics}.

But now we have, in addition, a representation of the wave function as a multi-layered field on the $3$D space.
The Hamiltonian $\hat{H}$ also has a representation $\rep(\hat{H})$ acting on $\widetilde{\hilbert}$, as defined in equation \eqref{eq:operators_fields}.
Let us first see how it acts on products $\psi_1(\x_1)\ldots\psi_\n(\x_\n)$.
Even if separable states can evolve into nonseparable states, we can still consider the instantaneous value of the state at a time $t$, and focus on product terms of the form $\psi_1(\x_1,t)\ldots\psi_\n(\x_\n,t)$ in the total nonseparable state. In this case, each term of the kinetic part of $\hat{H}$, $-\frac{\hbar^2}{2m_j}\nabla_{j}^2$, acts by differentiating only $\psi_j$. In terms of multi-layers of the form $\(\psi_1\btimes\ldots\btimes\psi_\n\)(\x,t)$, its representation $\rep\(-\frac{\hbar^2}{2m_j}\nabla_{j}^2\)=-\frac{\hbar^2}{2m_j}\widetilde{\nabla}_{j}^2$ acts by 
\begin{eqnarray}
\label{eq:hamiltonian_kinetic_field_j}
\(-\frac{\hbar^2}{2m_j}\widetilde{\nabla}_{j}^2\(\psi_1\btimes\ldots\btimes\psi_\n\)\)(\x,t) \qquad\qquad\nonumber \\
= -\frac{\hbar^2}{2m_j}\({\psi_1}\btimes\ldots\btimes\nabla_{j}^2{\psi_j}\btimes{\psi_\n}\)(\x,t).
\end{eqnarray}
Since the operator $\nabla$ on fields on the $3$D space is local, the kinetic terms of the Hamiltonian act locally on the $3$D space.

The potential terms $V\(\x_j,\x_k\)$ involve a dual role of the one-particle wave functions composing $\psi_1(\x_1)\ldots\psi_\n(\x_\n)$, since one of them, say $\psi_j(\x_j)$ has the role of the source of the potential, and $\psi_k(\x_k)$ is the one affected by the potential. But by considering an instantaneous superposition of products of the form $\psi_1(\x_1)\ldots\psi_\n(\x_\n)$, we can extract the total potential affecting $\psi_k(\x_k)$ being sourced by all $\psi_j(\x_j)$ with $j\neq k$,
\begin{eqnarray}
\label{eq:hamiltonian_potential_jk}
\sum_{j\neq k}V\(\x_j,\x_k\)\psi_1(\x_1)\ldots\psi_\n(\x_\n) \qquad\qquad\qquad\qquad\qquad \\
= \psi_1(\x_1)\ldots\(\(\sum_{j\neq k}V\(\x_j,\x_k\)\)\psi_k(\x_k)\)\ldots\psi_\n(\x_\n),\nonumber 
\end{eqnarray}
where we sum only over $j\neq k$ and keep $k$ fixed.

This allows us to separate the effect of the total potential on each $\psi_k$ as $V^{\tn{tot}}(\x_k)\psi_k(\x_k)$, where $V^{\tn{tot}}(\x_k)=\sum_{j\neq k}V\(\x_j,\x_k\)$. Note that while $V^{\tn{tot}}(\x_k)$ is obtained by summing various potentials of the form $V\(\x_j,\x_k\)$, $j\neq k$, this only means that it depends on the positions of the sources, but what matters to the particle represented by $\psi_k(\x_k)$ is the total value at $\x_k$, $V^{\tn{tot}}(\x_k)$.
If there is a source-free field, this can also be included in $V^{\tn{tot}}(\x_k)$, since it already has the form $V(\x_k)$.
Also, the Hamiltonian \eqref{eq:schrod_hamiltonian} approximates the potentials as acting instantaneously, while in fact we should take into account the limit velocity for the interactions, which is $c$. But retarded potentials have the same form $V^{\tn{tot}}(\x_k)$ as well.
The separation of the Hamiltonian per particle, as well as the limited propagation velocity of the interactions, is more evident in the {\schrod}-Pauli and Dirac equations for many-particles interacting electromagnetically, where the momentum term for each particle is supplemented by a term due to the connection, see {\eg} \cite{Crater1983TwoBodyDiracEquations}.

When moving to multi-layered field representations, we no longer need to index $\x$ as $\x_k$ for each particle, but we still need to index the total potential as $\widetilde{V}^{\tn{tot}}_{k}(\x)$, in order to know that it is the potential affecting $\psi_k$ and sourced by all other $\psi_j$ and possible sourceless components. So the representation of the Hamiltonian becomes
\begin{equation}
\label{eq:schrod_hamiltonian_field}
\rep(\hat{H}) = -\sum_j\frac{\hbar^2}{2m_j}\widetilde{\nabla}_{j}^2 + \sum_{k}\widetilde{V}^{\tn{tot}}_{k},
\end{equation}
and acts on $\Psi(\x,t)=\psi_1(\x,t)\btimes\ldots\btimes\psi_\n(\x,t)$ by
\begin{eqnarray}
\label{eq:schrod_hamiltonian_field_act}
\rep(\hat{H})\Psi(\x,t) \qquad\qquad\qquad\qquad\qquad\qquad\qquad\qquad\qquad\qquad\qquad\\
= \(\sum_{k}{\psi_1}\btimes\ldots\btimes\(\(-\frac{\hbar^2}{2m_j}\nabla_{k}^2+V^{\tn{tot}}_k\){\psi_k}\)
\btimes\ldots\btimes{\psi_\n}\)(\x,t).\nonumber
\end{eqnarray}

Even if the potentials depend on the fields that sourced them, we can see that the Hamiltonian acts pointwisely, and act differently in each layer. The result extends immediately to superpositions of separable states by the linearity of the operator $\rep(\hat{H})$.

Since in fact the potentials propagate locally with limited velocity, it also follows that the unitary time evolution in the field representation is local. 
This may seem at odds with the well-established result that there are correlations in the outcomes of quantum measurements which appear to be nonlocal \cite{Bell64BellTheorem,Aspect99BellInequality}. But such correlations are obtained only by measurements, which seem to require the occurrence of a projection of the state vector normally associated to measurements \cite{vonNeumann1955MathFoundationsQM}. Such a projection would change instantaneously the wave function everywhere, so it would be nonlocal. Now, that we know that even the most highly entangled quantum states can be represented as fields on the $3$D space, it becomes clearer that nonlocality is not due to the fact that the wave function is defined on the configuration space. Nonlocal correlations occur during the measurements. More about this in section \sref{s:EPR}.

\section{Multi-layered field representation of quantum states}
\label{s:multilayer}

In section \sref{s:st_rep}, we have seen that we can form tuples of wave functions or fields defined on the $3$D space, by using constructions encountered in the theory of fiber bundles, leading to representations of separable states as fields defined on the $3$D space. This is done by using the operation $\btimes$, which is defined using an equivalence class of direct sums of vector fields. Superpositions of separable states are represented as direct sums of fields representing separable states. One not only gets a $3$D space representation of the wave functions defined on the configuration space, but also of the linear operators acting on them, as operators on the fields \eqref{eq:operators_fields}. This construction provides a background for an intuitive interpretation, based on multiple layers. A \emph{layer} consists of the representation of a separable state, and superpositions of such separable states are represented as linear combinations of such layers, which can be called \emph{multi-layers}. Fig. \ref{multi-layered.png} depicts this idea.

Equation \eqref{eq:multi-layered_field} shows how the multi-layered fields like the one in Fig. \ref{multi-layered.png} can be represented as long chains of direct sum of vector fields on $\M$, some of them coupled into layers by transformations $\mathcal{G}_{\nD{\D_1},\ldots,\nD{\D_\n}}$ as in equation \eqref{eq:btimes_transf_n_orbit}.
\begin{equation}
\label{eq:multi-layered_field}
\Psi=(\underbrace{\psi_1,\psi_2,\psi_3}_{\mathcal{G}_{\nD{\D_1},\nD{\D_2},\nD{\D_3}}},\underbrace{\psi_1',\psi_2',\psi_3'}_{\mathcal{G}_{\nD{\D_1},\nD{\D_2},\nD{\D_3}}},\underbrace{\psi_1'',\psi_2'',\psi_3''}_{\mathcal{G}_{\nD{\D_1},\nD{\D_2},\nD{\D_3}}}).
\end{equation}

The representation $\rep$ of operators on the vector space $\hilbert$ as operators on the vector space $\widetilde{\hilbert}$ defined in equation \eqref{eq:operators_fields} applies, in particular, to \emph{creation} and \emph{annihilation} operators, which are used to construct quantum states out of the vacuum state. In this sense, creating and annihilating particles whose states are from the basis \eqref{eq:basis_nonseparable_field} can be seen intuitively as adding and removing particles from the layers, or rather as moving the state from one combination of layers to another one.

The idea behind this model may be, implicitly and informally, behind the intuition of some working physicist, who seem to consider the wave function as defined on the configuration space, but at the same time on the $3$D space. If not, it can be a basis for such an intuition. Mathematical manipulation works perhaps easier in the tensor product formalism, but there are some intuitive hints of the wave function being defined on the $3$D space. First, the dynamics, as explained in section \sref{s:locality}. Then, the measuring apparatus is usually considered implicitly quasi-classical, having all parts well localized in the $3$D space, which suggests that the wave functions of the measured particles are there too. Then, the representation of quantum states by applying combinations of creation and annihilation operators on the representation of the vacuum state can be easily understood as operating on the layers or multiple layers. Nevertheless, a pedagogical emphasis of the configuration space representation as done in \cite{Albert2019How2TeachQM} will retain its importance for the understanding of quantum mechanics.

In general, the layers are not usually conserved by unitary time evolution, because separable states don't remain separable. This happens in particular when interactions are present. However, local interactions are understood in the multi-layered field representation to lead to local dynamics of the fields representing the wave functions.

\section{Nonlocal correlations. The EPR experiment}
\label{s:EPR}

We have seen that the wave function admits a $3$D space representation, even when entanglement is present, and its dynamics is local as long as only unitary evolution takes place.
This may seem to contradict the existence of nonlocal correlations in quantum mechanics.
In fact it doesn't, because nonlocal correlations appear when quantum measurements are made.
Nonlocal correlations are not due to entanglement alone, but to whatever happens that we call projection of the state vector \cite{vonNeumann1955MathFoundationsQM}, when applied to an entangled state.

Let's see how this works in the EPR experiment \cite{EPR35,Bohm51}. Consider the following state of two spin $1/2$ particles,
\begin{equation}
\label{eq:spin_singlet_config_space}
\psi_A(\x_A,+)\psi_B(\x_B,-) - \psi_A(\x_A,-)\psi_B(\x_B,+),
\end{equation}
where $\psi_j(\x_j,\pm)$, $j\in\{A,B\}$, denote the components of the wave function corresponding to the spin along the $\pm z$ axis. 
Here I use as indices $A$ and $B$, to honor Alice and Bob for their tireless efforts to perform our thought experiments.

The two particles are assumed to go in different places in space, where the spin of one of them is measured by Alice, and the spin of the other by Bob. Then, if both of them measure the spin along the $z$ axis, Alice gets $+\frac12$ and Bob gets $-\frac12$, or vice versa. 
The result is obtained, according to \cite{vonNeumann1955MathFoundationsQM}, by projecting to one of the eigenspaces of the combined spin operator corresponding to the joint measurement, $\hat{\sigma}_z^A\otimes\hat{\sigma}_z^B$.

Let's rewrite \eqref{eq:spin_singlet_config_space} in terms of the multi-layered field representation on the $3$D space $\M$:
\begin{equation}
\label{eq:spin_singlet_physical_space}
\psi_A^+\btimes\psi_B^- - \psi_A^-\btimes\psi_B^+,
\end{equation}
where $\psi_j^\pm(\x)=\psi_j(\x,\pm)$, $j\in\{A,B\}$. 

The combined spin operator $\hat{\sigma}_z^A\otimes\hat{\sigma}_z^B$ translates, via the isomorphism $\rep$, into an operator $\rep\(\hat{\sigma}_z^A\otimes\hat{\sigma}_z^B\)$ on multi-layered fields, \emph{cf.} equation \eqref{eq:operators_fields}. Then,
\begin{equation}
\label{eq:EPR_spin_operator_fields}
\rep\(\hat{\sigma}_z^A\otimes\hat{\sigma}_z^B\) = \widetilde{\sigma_z^A}\btimes\widetilde{\sigma_z^B}.
\end{equation}
Here, $\widetilde{\sigma_z^A}$ acts on the first sublayer of each layer, and $\widetilde{\sigma_z^B}$ acts on the second sublayer of each layer.
Their eigenstates select the fields $\psi_A^+\btimes\psi_B^-$ and $\psi_A^-\btimes\psi_B^+$.
The projection postulate requires that the two particles are found either in one state, or the other.

Now, the measurement is about determining whether the observed particles are either in the layer corresponding to $\psi_A^+\btimes\psi_B^-$, or in the one corresponding to $\psi_A^-\btimes\psi_B^+$. If we see the EPR experiment, intuitively, as taking place in the $3$D space, it is about the locations of the observed particles in one layer or another, and the multi-layered field representation gives a support for this intuition.

If the spin measurements are done by orienting the Stern-Gerlach devices along different directions, the resulting layer will be ``oblique'' with respect to the layers in equation \eqref{eq:spin_singlet_physical_space}, which is guaranteed by the isomorphism $\rep$.

In general, the projection operators corresponding to the possible outcomes, being linear operators on the Hilbert space $\hilbert$, have, via the isomorphism $\rep$, corresponding projectors on the space of multi-layered fields $\widetilde{\hilbert}$. The projection postulate translates in general in selecting a layer for the state of the observed system. Quantum correlations are obtained exactly as in the standard representation of quantum states, due to the isomorphism $\rep$ between quantum states and multi-layered fields on the $3$D space, and between the operators on quantum states and operators on fields (section \sref{s:st_rep}).

\section{Primitive ontology of the wave function}
\label{s:implications}

The construction presented in this paper is just a representation of the quantum states, in terms of fields on the $3$D space, rather than in terms of wave functions on the $3\n$-dimensional configuration space. As such, while it is not committed to any interpretation of quantum mechanics or ontology, it is able to provide a primitive ontology for the wave function. 

The ontic position about the wave function is endorsed by some results and theorems \cite{Spekkens2005Contextuality,HarriganSpekkens2010EinsteinIncompleteness,ColbeckRenner2011NoExtensionOfQM,ColbeckRenner2012Reality,PBR2012RealityOfPsi,Hardy2013AreQuantumStatesReal,Ringbauer2015MeasurementsRealityWavefunction,Myrvold2018PsiOntology}. 
However, the fact that we represent wave functions on the configuration space was often regarded as a sign that the wave function is not a real physical thing, as shown in section \sref{s:intro}. 
 
The wave function is taken to be ontic in the \emph{many worlds interpretation} (MWI) \cite{Eve57,Eve73,dWEG73,SEP-Vaidman2002MWI,Saunders2010ManyWorlds,Pas2017MWI,Marchildon2017SpacetimeInMWI,carroll2019MadDogEverettianism}, where the definiteness of outcomes is explained by the fact that two wave functions in superposition ignore one another. 

Similarly, in some spontaneous or objective \emph{collapse theories} \cite{GRW86,Ghirardi1990RelativisticDynamicalReductionModels,Pearle1989StochasticSpontaneousLocalization,penrose1996gravityQuantumStateReduction}, the wave function is taken as ontic, as well as its collapse, while in the \emph{flash ontology} this is avoided.
 
Even in Bohm's version of the {\pwt} \cite{Bohm52,Bohm95,Bohm2004CausalityChanceModernPhysics} the wave function is considered real~\footnote{Apparently, Bell endorsed this view on the {\pwt}: \emph{``No one can understand this theory until he is willing to think of [the wave function] as a real objective field rather than just a `probability amplitude'. Even though it propagates not in 3-space but in 3N-space''} \cite{Bell2004SpeakableUnspeakable} p. 128.}, although there are versions which try to avoid this, like the \emph{nomological} interpretation of the wave function (where the wave function is interpreted as a physical law prescribing a nonlocally coordinated motion of the point-particles in the $3$D space) \cite{Durr1995BohmianWavefunction,GoldsteinTeufel2001QuantumSpacetimeWithoutObservers,GoldsteinZanghi2013RealityAndRoleOfWavefunction}~\footnote{Also see the \emph{Humeanist interpretation}, for a weaker version of the nomological position \cite{Loewer1996HumeanSupervenience,Loewer2001DeterminismAndChance,Hall2015HumeanReductionism,Miller2014QEntanglementBMHumeanSupervenience,Esfeld2014QHumeanism,Bhogal2017HumeanismEntanglement,Callender2015OneWorldOneBeable,EsfeldDeckert2017MinimalistOntologyOfNaturalWorld}.}.

MWI has been recently criticized for not having a $3$D space or spacetime ontology \cite{Maudlin2010CanTheWorldBeOnlyWavefunction,Norsen2017FoundationsQM}. A similar situation is present in the collapse theories, where the wave function collapses. Also in Bohm's version of the {\pwt}, even though there the wave function plays a different role.

An usually adopted primitive ontology of the wave function in such interpretations, originating with {\schrod}, is the charge or mass \emph{density ontology}. It consists in considering the total charge or mass density of the universal wave function, which is a function on the $3$D space, as the ontology. This solution is used for example in collapse theories (where it is called GRWm, to be distinguished from the flash ontology called GRWf), and in MWI \cite{SEP-Vaidman2002MWI}, and it is satisfactory to some extent. Unfortunately, this kind of ontology misses most of the information encoded in the universal wave function.
The position that the wave function requires more than the mass density ontology is endorsed for example in \cite{DavidAlbert1996ElementaryQuantumMetaphysics,Maudlin2007CompletenessSupervenienceOntology,Maudlin2013TheNatureOfTheQuantumState,Dewar2016LaBohume,Maudlin2019PhilosophyofPhysicsQuantumTheory}.
In addition, in the case of GRWm, it seems to be at odds with relativistic simultaneity, for which a more relativistic invariant modification was proposed \cite{Bedingham2014MatterDensityAndRelativisticWavefunctionCollapse,Tumulka2006RelativisticGRW}.

An improved $3$D space ontology for the wave function, compared to the mass density ontology, is the \emph{space state realism}, suggested in \cite{WallaceTimpson2010QMOnSpacetime}. There, more information about the total wave function is gained by using the reduced density matrices. Also see \cite{Swanson2018RelativisticSpacetimeStateRealist} for an extension to the relativistic case and quantum field theory. While it is an improvement in the amount of information from the wave function that it represents, it still captures a very small part of it. It was criticized in \cite{Maudlin2019PhilosophyofPhysicsQuantumTheory}, mainly for not providing a monistic ontology, for mixing together different branches (if applied to MWI), and for lack of local separability, due to the use of reduced density matrices. This seems to make unlikely the existence of local beables able to encode the full information about the state. 

For MWI, {\pwt}, and collapse theories, the multi-layered field representation proposed here allows to take the full universal wave function as primitive ontology. In contrast to {\schrod}'s density ontology and even to space state realism, a multi-layered field ontology would retain all the information in the wave function.
Nonseparability goes away, since we can recover the multi-layered field over a region $A\cup B$ by knowing it on $A$ and $B$. This is possible because the representation keeps track of the layers and sublayers, and the extension from $A$ and $B$ is done by connecting each sublayer and layer over $A$ to those over $B$ (see \sref{s:local_separability} and \sref{s:locality}).
In the case of collapse theories, a primitive ontology based on multi-layered fields would still be a tension with relativistic simultaneity. However, one should not exclude the possibility of an adaptation of the proposal in \cite{Bedingham2014MatterDensityAndRelativisticWavefunctionCollapse,Tumulka2006RelativisticGRW} to this ontology.

A wave function spacetime ontology would be in particular useful to those approaches trying to save relativistic invariance and locality at the expense of \emph{statistical independence} of the states to be observed from the observation to be made, while still being able to get the nonlocal correlations. 
Saving locality is not actually forbidden by Bell's theorem \cite{Bell64BellTheorem}, because the theorem relies on two assumption, locality and statistical independence, to derive Bell's inequality \cite{Maudlin1996SpacetimeInTheQuantumWorld,Gill2014StatisticsCausalityBellTheorem}. Experimental observations of violations of Bell's inequality \cite{Aspect99BellInequality} imply only that at least one of these hypotheses should be rejected, but there is the option to reject statistical independence and keep locality, as these models show. Such models thus involve a dependence of past events on future events, sometimes called \emph{retrocausality} \cite{deBeauregard1977TimeSymmetryEinsteinParadox,SEP-2019qm-retrocausality,Rietdijk1978retroactiveInfluence,Wharton2007TimeSymmetricQM,price2008toyRetrocausality,sutherland2008causallySymmetricBohm,Argaman2008Retrocausality,price2015disentangling,sutherland2017RetrocausalityHelps,EmilyAdlam2018SpookyActionTemporalDistance,CohenCortesElitzurSmolin2019RealismAndCausalityI,WhartonArgaman2019BellThereomAndSpacetimeBasedQM}. This path is also taken in the \emph{transactional interpretation}, which also provides a mechanism of negotiation taking place not in the physical time, but in a \emph{pseudotime} \cite{cramer1986transactional,cramer1988overview,Kastner2012TransactionalBook}.
Such theories may be able to avoid nonlocality as ``action at a distance'', by relying to get the nonlocal correlations either on very special, seemingly ``conspirational'' initial conditions, or on an apparent zig-zag of local causal influence back and forward in time. Another interesting proposal is to take spacetime as a context for quantum measurements, giving by this a local account of the original EPR experiment \cite{Khrennikov2002LocalRealismContextualismLoopholesBell,Khrennikov2009ContextualApproachToQuantumFormalism} by using contextuality. Contextuality is required by the Kochen-Specker theorem \cite{Bel66,KochenSpecker1967HiddenVariables,AbbotCaludeSvozil2015-AnalyticKS,Loveridge2015ManyMathematicalMerminKochenSpecker}.
Such approaches have the advantage of being more consistent with relativistic invariance~\footnote{Lorentz invariance seems more difficult to be satisfied by collapse theories and {\pwt}, but such proposals exist \cite{Bedingham2014MatterDensityAndRelativisticWavefunctionCollapse,Tumulka2006RelativisticGRW,DurrEtAl2014BohmianRelativistic,sutherland2008causallySymmetricBohm,sutherland2017RetrocausalityHelps}.}.

There are also proposals based on unitary evolution of the wave function (for single worlds), without any real collapse, like Schulman's \emph{special states} approach \cite{schulman1984definiteMeasurements,schulman1991definiteQuantumMeasurements,schulman1997timeArrowsAndQuantumMeasurement,schulman2016specialStatesDemandForceObserver} and references therein, 't Hooft's \emph{cellular automaton interpretation} \cite{tHooft2011CollapseSchrodingerBornRule,Elze2006DeterministicModelsForQuantumFields,Elze2014Action4CellularAutomata,tHooft2016CellularAutomatonInterpretationQM}, and a proposal based on \emph{global consistency} \cite{Sto08b,Sto12QMb,Sto12QMc,Sto16aWavefunctionCollapse,Sto16UniverseNoCollapse,Sto2019PostDetermined}~\footnote{This is proposed to work in terms of gluing local solutions into global solutions, as in \emph{sheaf theory} \cite{MacLaneMoerdijk92,bredon1997sheaf,ROWells2008ComplexManifolds}. Constraints, mainly topological in nature, prevent most local solutions or initial conditions to be extended globally, which leads to a prevention of statistical independence. Sheaf theory and its sibling \emph{topos theory} were already applied in the foundations of quantum mechanics, in particular to contextuality, see  \cite{DoringIsham2008ToposFoundationsPhysicsI2IV,AbramskyBrandenburger2011sheafNonlocalityContextuality,AbramskyConstantin2014ClassificationMultipartiteStatesNonlocality,Constantin2015SheafQMThesis,CeciliaFlori2012ToposQuantumTheory} and references therein.}. 
Such theories make a clear prediction, identified in \cite{schulman2016specialStatesDemandForceObserver} -- a particle prepared in a spin eigenstate along some axis, when having its spin measured again, will exhibit a force required to reorient the spin iff the new axis differs from the previous one. The reason is that, in such models, we are not allowed to turn an ontic state into a superposition of ontic states.
Superpositions of ontic states are, in these approaches, epistemic.
Another prediction of this class of theories is that the conservation laws are \emph{not} violated, while collapse or branching should violate them \cite{Sto16UniverseNoCollapse}, which so far was never found to be wrong.
These predictions are probably very difficult to test experimentally, but if confirmed, approaches based on collapse or branching can be ruled out. 
Presumably more such predictions can be made, for example concerning the spacetime curvature due to the quantum states, but this is probably inaccessible to our experimental capabilities.
In pursuing locality, all such approaches would benefit from a $3$D space ontology provided by the multi-layered fields, especially since their unitary dynamics is local, as shown in section \sref{s:locality}.

Aside from providing a $3$D space ontology and locality for the unitary dynamics of the wave function, the multi-layered field representation is not enough to solve the measurement problem and the problem of the emergence of the quasi-classical world. The representation doesn't provide the sort of beables able to determine the outcomes of measurements, or the way the universal wave function branches. It also doesn't eliminate {\schrod} cats. Solving these problems should be done in conjunction with additional hypotheses of the types proposed by the various interpretations of quantum mechanics.

Regardless of which interpretation of quantum mechanics is the right one or at least the preferred one by the reader, the existence of a $3$D space representation of the wave function can be helpful to those approaches taking the wave function as ontic, but also to those taking it as epistemic or nomological.

\section{Possible objections}
\label{s:objections}

When developing the multi-layered field representation, I tried to submit it to various personal objections, and to anticipate potential objections from the readers. Here are some of them, that I considered more relevant or likely to be raised.

\begin{objection}
The major claim of the paper sounds interesting, but it is a too long reading, and I will not invest time in something known to be impossible.
\end{objection}
\begin{reply}
I suggest, before you decide whether to read it carefully, to check Figure \ref{multi-layered.png}.
\end{reply}

\begin{objection}
This representation is equivalent to the configuration space representation. How does this help? Doesn't it mean that the configuration space remains?
\end{objection}
\begin{reply}
The configuration space remains. The objective was not to remove it, and we can't remove it, because it is inherent to quantum mechanics. The situation is similar to classical mechanics, where one can represent the particle configurations both in the $3$D space, and in the $3\n$-dimensional configuration space, and also in the $6\n$-dimensional phase space.
\end{reply}

\begin{objection}
For a theory to be scientific, it has to be falsifiable. Are there any empirical predictions of your model?
\end{objection}
\begin{reply}
The model constructed here is just a representation, equivalent to the {\schrod} representation of wave functions on the configuration space, but in terms of fields on the $3$D space. For this reason, it should not be expected to give different predictions from quantum theory, because it is not a different theory. 
\end{reply}

\begin{objection}
It seems unlikely that the universe is as complicated as your construction.
\end{objection}
\begin{reply}
The role of this representation is merely to provide a proof of concept that the universal wave function can be an object in the physical space. Whether this is effectively realized in nature, and whether it is realized in this form or another, it doesn't say.
\end{reply}

\begin{objection}
The fiber bundle structure in your representation is too large!
\end{objection}
\begin{reply}
It is very large indeed, because it needs to be able to represent wave functions defined on a space with infinitely many dimensions.
\end{reply}

\begin{objection}
There are no foundational open problems in quantum mechanics, or they are solved by interpretation X. Your construction is pointless.
\end{objection}
\begin{reply}
If your favorite interpretation X requires the wave function to be ontic, then the multi-layered field representation may help (section \sref{s:implications}). If in your favorite interpretation X the wave function is epistemic or nomological, it is not hurt by this representation.
\end{reply}

\begin{objection}
You claimed that the wave function can be understood as  defined on the $3$D space. But we know from the EPR experiment that there is entanglement. This disproves your theory.
\end{objection}
\begin{reply}
The multi-layered field representation is capable to represent all possible quantum states in {\NRQM}, including entangled systems (section \sref{s:nonseparable}).
For a discussion of the EPR see section \sref{s:EPR}.
\end{reply}

\begin{objection}
If your representation is able to represent all possible quantum states in {\NRQM}, including entangled systems, then doesn't this mean that it predicts {\schrod} cats?
\end{objection}
\begin{reply}
It makes the same predictions as quantum mechanics, because it is quantum mechanics. So yes, unfortunately it also predicts {\schrod} cats, but it is not the objective of this representation to solve this problem. But this representation can be part of the ontology in some interpretations trying to solve it (see section \sref{s:implications}).
\end{reply}

\begin{objection}
A $3$D space representation of the wave function can't exist, because it would violate nonlocality.
\end{objection}
\begin{reply}
Not only it does exist, but its unitary dynamics is local, at least as long as no measurements are involved (see section \sref{s:locality}). This locality is in no conflict with Bell's theorem, because it is true only as long as only unitary evolution happens. If measurements are involved, nonlocal correlations do appear though (section \sref{s:EPR}).
\end{reply}

\begin{objection}
Your representation can't give an ontology to the wave function, because an ontology should also include beables that solve the measurement problem.
\end{objection}
\begin{reply}
The purpose of this representation is merely to prove the possibility that the wave function can be understood as existing in the $3$D space. It doesn't solve, and I don't claim it solves, the measurement problem. This should be done in conjunction with other theories or interpretations of quantum mechanics (see section \sref{s:implications}).
\end{reply}

\subsection*{Acknowledgement}
The author thanks Eliahu Cohen, Hans-Thomas Elze, Art Hobson, Louis Marchildon, Travis Norsen, Larry Schulman, Lev Vaidman, Sofia Wechsler, and Ken Wharton, for their valuable suggestions offered to a previous version of the manuscript. Nevertheless, the author bares full responsibility for the article.

\appendix

\section{Vector bundles}
\label{s:vbundle_math}

Historically, vector fields were regarded are functions defined on a space $M$ and valued in a fixed vector space $V$. But since this could not work well in all situations, for example if the base space $M$ is topologically nontrivial, or when the field does not consist of vectors from $M$ (in the case when $M$ itself can be seen as a vector space), the idea had to be made invariant and generalized. This led to the necessity to associate a distinct copy $V_x$ of the vector space $V$ at each point $x\in M$, and also to specify how this construction can be global on $M$ in a continuous way. This led to the notion of vector bundle, which consist of the continuous union of all copies $V_x$ of the vector space $V$ at all points $x\in M$.

I very briefly give the more general and rigorous definition of a vector bundle. More about this rich topic can be found for example in \cite{NashSen1983TopologyGeometryPhysicists,Nakahara2003GeometryTopologyPhysics,Bleecker2005GaugeTheoryVariationalPP,ROWells2008ComplexManifolds}.

\begin{definition}
\label{def:vector_bundle_math}
A complex \emph{vector bundle} $V \to E \stackrel{\pi}{\to} M$, of \emph{rank} $k$, where $k\in\{1,2,\ldots,\infty\}$, is defined by
\begin{enumerate}
	\item 
	A $k$-dimensional vector space $V$ called the \emph{typical fiber}.
	\item 
	Two topological spaces: a \emph{base space} $M$, and a \emph{total space} $E$.
	\item 
	A continuous surjection $\pi:E\to M$, called \emph{bundle projection}, so that for every $x\in M$, the \emph{fiber} over $x$, $\pi^{-1}(x)$ is a $k$-dimensional complex vector space $V_x$ isomorphic to $V$.
	\item 
	The following \emph{compatibility condition}: for every $x\in M$, there is an open neighborhood $U$ of $x$, and a homeomorphism $\varphi_U:U\times\C^k\to\pi^{-1}(U)$ so that for all points $y\in U$ and vectors $v\in\C^k$,
\begin{enumerate}
	\item 
	$(\pi\circ\varphi_U)(y,v)=y$, and
	\item 
	the map $v\mapsto\varphi_U(y,v)$ is a vector space isomorphism between $\C^k$ and $V_y=\pi^{-1}(y)$.
\end{enumerate}
\end{enumerate}
\end{definition}

\section{Multi-layered field representation of quantum fields}
\label{s:quantum_fields}

We have seen that the wave function in {\NRQM} can be represented in terms of fields on the $2$D-space. This representation was made in general enough settings to include spin, internal degrees of freedom, and all entangled states allowed in {\NRQM}. However, it is important to see if it can be scaled up to quantum field theories. Apparently, classical fields have much more degrees of freedom than classical systems of point particles. In addition, quantum field theory is relativistic, as opposed to {\NRQM}.

When we limit to a fixed $\n$, especially a finite value of it, the number of degrees of freedom is $3\n$, and seems ``small'' compared to the continuous range of degrees of freedom of a classical field. However, many-particle systems are unexpectedly rich, because the Fock space has an uncountable number of dimensions even if the one-particle Hilbert space would be separable. This is because of the following reason. The exterior algebra of an $n$-dimensional vector space has $2^n$ dimensions. The symmetric tensor algebra has even more, being infinite-dimensional. So, the Fock spaces for both fermions and bosons are rich enough to represent the continuous set of degrees of freedom needed in field quantization.

In order to see how we can apply the multi-layered field representation to quantum fields, we take the example of a scalar field, solution of the Klein-Gordon equation.
Following for example \cite{Srednicki2007QFT}, we start with a classical scalar field $\varphi(\x,t)$, which is defined on $\M\times\R$, and is valued in $\C$. Since the theory is relativistic, $\M$ is the $3$D-space obtained by fixing a timelike vector (representing the time direction) in the Minkowski spacetime. Then, we expand the classical field $\varphi$ in plane waves:
\begin{equation}
\label{eq:KG_expansion}
\varphi(\x,t)=\int_{\R^3}\frac{\de \mathbf{k}}{(2\pi)^3\sqrt{\omega_\mathbf{k}}}\(a(\mathbf{k})e^{-i\omega_\mathbf{k}t+i\mathbf{k}\cdot\x} + a^\ast(\mathbf{k})e^{i\omega_\mathbf{k}t-i\mathbf{k}\cdot\x}\),
\end{equation}
where $\omega_\mathbf{k}=\sqrt{\abs{\mathbf{k}}^2 + m^2}$, $\mathbf{k}\in\R^3$ is the wave vector, and the dot product in $\mathbf{k}\cdot\x$ is in $\R^3$.

To quantize the field $\varphi(\x,t)$, we promote the Fourier coefficients $a$ and $a^\ast$ to operators $\hat{a}$ and $\hat{a}^\dagger$ satisfying the commutation relations
\begin{equation}
\label{eq:KG_CCR}
\begin{cases}
\left[\hat{a}(\mathbf{k}),\hat{a}(\mathbf{k'})\right] = \left[\hat{a}^\dagger(\mathbf{k}),\hat{a}^\dagger(\mathbf{k'})\right] = 0,\\
\left[\hat{a}(\mathbf{k}),\hat{a}^\dagger(\mathbf{k'})\right] = (2\pi)^3\delta(\mathbf{k}-\mathbf{k'}).
\end{cases}
\end{equation}

Then, $\hat{a}^\dagger(\mathbf{k})$ and $\hat{a}(\mathbf{k})$ create and annihilate scalar particles.

By starting from the vacuum state $\ket{0}$ and applying $\hat{a}$ and $\hat{a}^\dagger$, we construct the bosonic Fock space as being generated by the basis
\begin{equation}
\label{eq:KG_Fock}
\ket{\mathbf{k}_1,\ldots,\mathbf{k}_\n} := \hat{a}^\dagger(\mathbf{k}_1)\ldots\hat{a}^\dagger(\mathbf{k}_\n)\ket{0}.
\end{equation}
But we already know how to construct the multi-layered field representation for this Fock space from section \sref{s:st_rep}.
So, there are no new difficulties when moving to quantum fields.

%

\end{document}